\newcommand{\real}{\mathbb{R}}
\newcommand{\Emptyset}{\text{\O}}
\newcommand{\Precison}{P}
\newcommand{\Recall}{R}
\newcommand{\FSCORE}{F1}
\newcommand{\SP}{\alpha}
\newcommand{\LB}{\lambda}
\newcommand{\PP}{(1-\alpha)}
\newtheorem{lemma}{Lemma}
\DeclarePairedDelimiterX\Basics[1](){ #1}
\begin{document}
\title{Efficient and High-Quality Seeded Graph Matching: Employing High Order Structural Information}

\author{
{\fontsize{12}{12}\selectfont
Haida Zhang$^{\dagger}$, Zengfeng Huang$^{\ddagger}$, Xuemin Lin$^{\dagger}$, Zhe Lin$^{\S}$, Wenjie Zhang$^{\dagger}$, Ying Zhang$^{*}$\\
}
\vspace{-3.5mm}
\fontsize{10}{10}
\selectfont
$^\dagger$University of New South Wales, Australia \ \ $^\ddagger$School of Data Science, Fudan University, China \\
\vspace{-4.3mm}
$^\S$East China Normal University, China  \ \  $^{*}$CAI, University of Technology Sydney\\
\vspace{-1mm}
\fontsize{9}{9} \selectfont\ttfamily\upshape
$^\dagger$haida.zhang@unsw.edu.au; $^\ddagger$huangzf@fudan.edu.cn; $^\dagger$\{lxue, zhangw\}@cse.unsw.edu.au; \\
\vspace{-1.3mm}
\fontsize{9}{9} \selectfont\ttfamily\upshape
$^\S$linzhe.ecnu@gmail.com; $^{*}$ying.zhang@uts.edu.au;\\
}

%
%
%
%
%


%
%


\maketitle

\begin{abstract}

Driven by many real applications, we study the problem of seeded graph matching. Given two graphs $G_1 = (V_1, E_1)$ and $G_2 = (V_2, E_2)$, and a small set $S$ of pre-matched node pairs $[u, v]$ where $u \in V_1$ and $v \in V_2$, the problem is to identify a matching between $V_1$ and $V_2$ growing from $S$, such that each pair in the matching corresponds to the same underlying entity.
Recent studies on efficient and effective seeded graph matching have drawn a great deal of attention and many popular methods are largely based on exploring the similarity between local structures to identify matching pairs. While these recent techniques work well on random graphs, their accuracy is low over many real networks. Motivated by this, we propose to utilize high order neighboring information to improve the matching accuracy. As a result, a new framework of seeded graph matching is proposed, which employs Personalized PageRank (PPR) to quantify the matching score of each node pair. To further boost the matching accuracy, we propose a novel postponing strategy, which postpones the selection of pairs that have competitors with similar matching scores. We theoretically prove that the postpone strategy indeed significantly improves the matching accuracy. To improve the scalability of matching large graphs, we also propose efficient approximation techniques based on algorithms for computing PPR heavy hitters. Our comprehensive experimental studies on large-scale real datasets demonstrate that, compared with state of the art approaches, our framework not only increases the precision and recall both by a significant margin but also achieves speed-up up to more than one order of magnitude.


\end{abstract}

\section{Introduction}\label{sec1}

In many applications such as social network de-anonymization, protein-network alignment,
pattern recognition, etc., a key task is to match two graphs $G_1=(V_1, E_1)$ and $G_2=(V_2, E_2)$ from different domains by building a mapping from $V_1$ to $V_2$. In general, $V_1$ and $V_2$ may only have a partial overlapping, thus the mapping is an injective mapping from a subset of $V_1$ to $V_2$. For example, $G_1$ and $G_2$ could be the user networks of Facebook and Twitter, where a part of the users in the two networks are the same.
Formally, given two graphs $G_1$ and $G_2$, the graph matching problem aims to identify all pairs of vertices $[u, v]\in V_1 \times V_2$ such that $u$ and $v$ correspond to the same entity (e.g., a person).

Graph match is often conducted by exploiting both structural information and semantic features \cite{malhotra2012studying, nunes2012resolving, zhang2015cosnet}. In this paper, we focus on structure-based graph matching for the following reasons:
1) the semantic features are often unavailable in the matching of networks \cite{singh2008global, mohammadi2017triangular}; 2) it has been shown structural information is the most important to identify a node in a network \cite{henderson2011s};
and 3) the techniques based on semantic features may be fragile against malicious users with fake profiles. Consequently, effective algorithms need to be developed to construct the matching relying solely on the graph structural information.
Since two graphs to be matched are usually not identical,  the problem of graph matching is much more challenging than the classic graph isomorphism problem which is widely believed to be intractable. 

In this paper, we study the problem of {\em seeded graph matching}; that is, we are additionally provided with a small set $S$ of pre-matched pairs of vertices $[u, v]$ ($u \in V_1$ and $v \in V_2$), and aim to identify a matching between $V_1$ and $V_2$ growing from $S$, such that each pair in the matching corresponds to the same  entity.



\vspace{0.1cm}
\noindent
\textbf{Applications.}
The seeded graph matching has many real applications. For example, in social networks such as Instagram and Facebook, each vertex represents a user, and an edge $(u_i, u_j)$ exists if the user $u_i$ is followed by another user $u_j$. Given that many users on Instagram may connect to their Facebook accounts, one can use such linking information as the seed pairs and identify other pairs of accounts that belong to the same individual. Consequently, we can recommend friends, social communities and products on one social network by utilizing information from other social networks. 

Another application is that seeded graph matching is a key step in a general graph matching (without seeds).  Many existing techniques \cite{zhu2013high, wang2018deepmatching, zhou2018structure} to build a general graph matching often involve two steps: 1) Seed Detection: detect two small sets $S_1$ and $S_2$ of vertices from $G_1$ and $G_2$, together with a one-to-one mapping from $S_1$ to $S_2$, and 2) Seed Propagation: expand the mapping between $S_1$ and $S_2$ to generate a graph matching between $G_1$ and $G_2$; that is, seeded graph matching.

\vspace{0.1cm}
\noindent
\textbf{Percolation Graph Matching and its Limit.}
Recently, the problem of seeded graph matching has drawn a reasonable deal of attention
\cite{yartseva2013performance, korula2014efficient, kazemi2015growing, zhou2016cross, shirani2017seeded,yacsar2018iterative}. Existing techniques is largely based on effectively using local information to identify matching pairs.  For example, if a vertex $u \in V_1$ has a neighbor $u'$ which is already matched to a neighbor $v'$ of $v \in V_2$, then $[u,v]$ gets a vote from the pair $(u',v')$ to be a valid matching pair. This is the central idea of the state of the art paradigm, {\em percolation graph matching} (PGM).
%
PGM was first proposed in \cite{narayanan2009anonymizing}; it iteratively maintains a set of matched pairs starting from the set $S$ of seed pairs: (1) in each round, each \emph{unused} (initially all seed pairs are unused) matched pair (e.g., the above $[u', v']$) percolates to its {\em neighboring} pairs (e.g., the above $[u, v]$) by adding one mark to each of them;
and (2) the pairs that have received marks, called \emph{candidate pairs}, are then examined by certain criteria and matched if they are qualified. For example, one may check whether the number of marks $r$ (known as the matching score of the pair) received is above a threshold $T$ \cite{yartseva2013performance}. Newly matched pairs are marked as unused matched pairs and will percolate in the next iteration. The algorithm terminates when all matched pairs are used (i.e., have percolated) and no new candidate pair qualifies.

There are several variants towards PGM based algorithms to specify their matching scores and matching criteria.  The authors in \cite{korula2014efficient} (also used in \cite{wang2018deepmatching}) propose to firstly consider pairs with high vertex degrees and then among such pairs, match the candidate pair with the highest score $r$ (also needs to be above a threshold T), while the authors in~\cite{kazemi2015growing} put the first preference on selecting
pairs with the highest matching score $r$ (also needs to be above a threshold T) and the second preference on minimum degree difference between the pair of vertices.  The authors in ~\cite{zhou2016cross} propose to simply choose the pairs with the maximal $r$ without the threshold constraint.
%
%

\begin{figure}[h]
\begin{centering}
\includegraphics [width=3.0in]{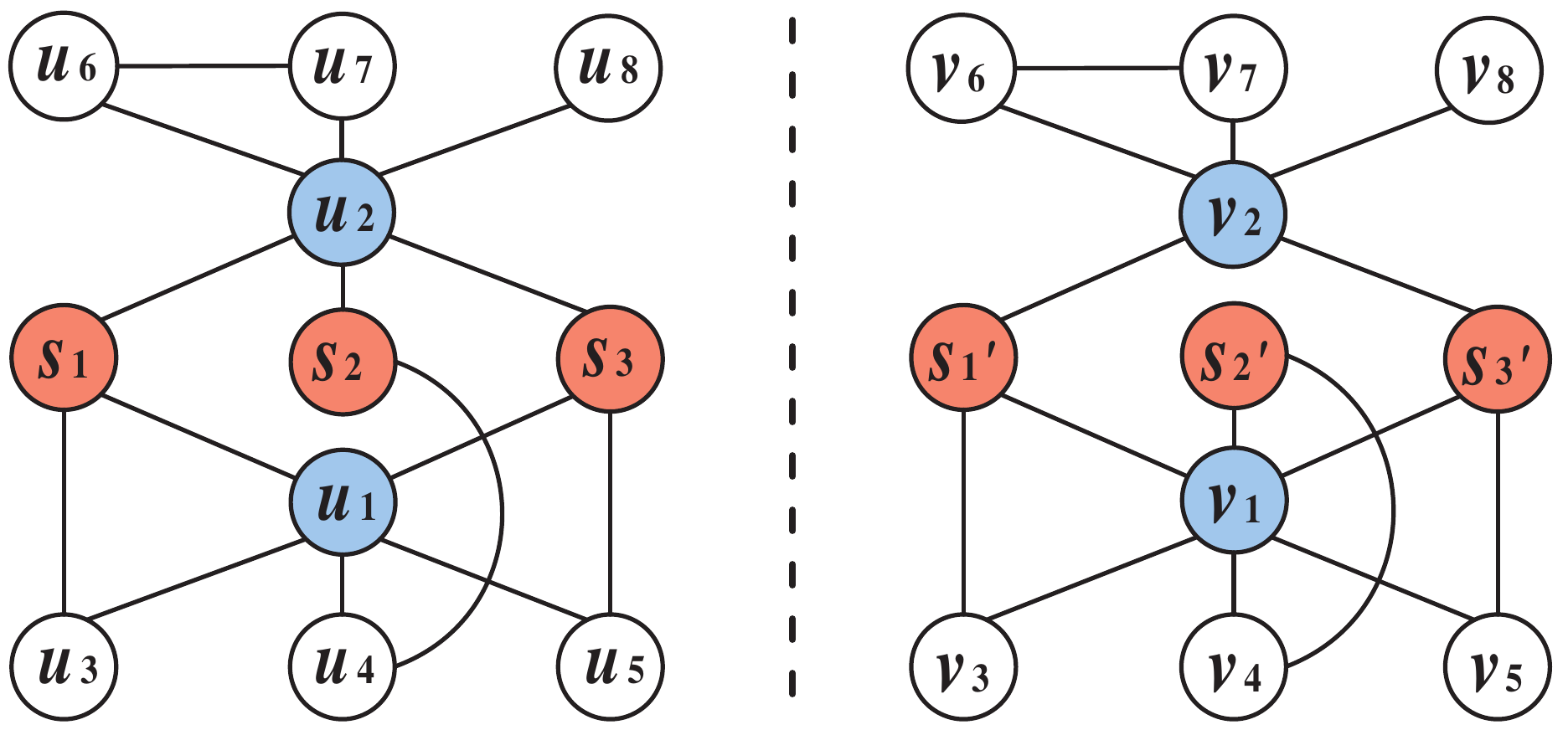}
\vspace*{-1mm}
\caption{Left Graph $G_L$ and Right Graph $G_R$. }
\label{MatchingExample}
\vspace*{-2mm}
\end{centering}
\end{figure}

A major limit of existing PGM algorithms is that relying solely on local information can easily lead to wrong matching.
Consider the example in  Figure \ref{MatchingExample}. Suppose that the seed pairs are $[s_i,s_i']$ for $1 \leq i \leq 3$. Intuitively, in this case $u_i$ should be matched with $v_i$ for $1 \leq i \leq 8$ though $G_L$ and $G_R$ are not identical.
However, the state of the art PGM algorithms \cite{korula2014efficient, zhou2016cross, kazemi2015growing} will only focus on the local information as shown in Figure \ref{local}. In particular, the algorithms \cite{korula2014efficient, kazemi2015growing} conclude that $u_1$ matches $v_2$ and $u_2$ matches $v_1$ if the threshold $T = 2$,\footnote{2 is the threshold they used.} while the algorithms in \cite{zhou2016cross} also leads to the same result by the greedy heuristic.

\begin{figure}[h]
\begin{centering}
\includegraphics [width=3.2in]{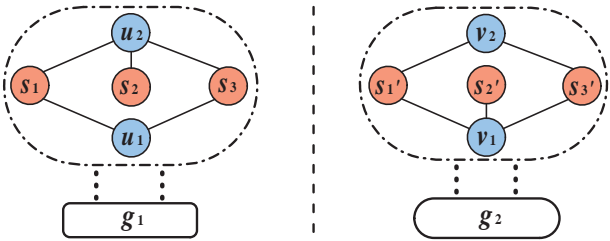}
\vspace*{-1mm}
\caption{Misled by Local Information}
\label{local}
\vspace*{-2mm}
\end{centering}
\end{figure}

\noindent
\textbf{Our approach.} To resolve this issue, we propose to evaluate the matching scores by combining (first order) neighboring information with higher order structural information. In particular, for each seed $s_i$, instead of just infecting its direct neighbors as in PGM, we propose to use a random walk-based model to quantify the relation between $s_i$ and other vertices, which is described as follows.

In the standard random walk, at each step the walker moves from the current vertex $v$ to one of its neighbors selected randomly, and each neighbor is selected with the same probability $\frac{1}{d_u}$ ($d_u$ is the degree of $u$). We will consider the decaying random walk. More concretely, the decaying random walk is the same as the standard one, except that at each step it has a probability $1-\LB$ to terminate the walk. Consider the example in  Figure \ref{MatchingExample},
starting at $s_i$, the probability that the decaying random walk reaches $u_j$ after {$1$} step is denoted as $p^1 (s_i, u_j)$. E.g.,
%
$p^1 (s_1, u_1) = \frac{1}{3} \LB$, since the walk may stop at the source $s_1$ with probability $1-\lambda$. Similarly, $p^1 (s_2, u_1) = 0$, $p^1 (s_3, u_1) = \frac{1}{3} \LB$, $p^1 (s_1, u_2) = \frac{1}{3} \LB$, $p^1 (s_2, u_2) = \frac{1}{2} \LB$, $p^1 (s_3, u_2) = \frac{1}{3} \LB$, and
$p^1 (s'_1, v_1) = \frac{1}{3} \LB$, $p^1 (s'_2, v_1) = \frac{1}{2} \LB$, $p^1 (s'_3, v_1) = \frac{1}{3} \LB$, $p^1 (s'_1, v_2) = \frac{1}{3} \LB$, $p^1 (s'_2, v_2) = 0$, $p^1 (s'_3, v_2) = \frac{1}{3} \LB$.
%
%
In general, we use $p^t (s, u)$ to denote the probability that a decaying random walk from $s$ reaches $u$ after $t$ steps. To measure the structural similarity of a vertex $u$ in $G_1$ and a vertex $v$ in $G_2$, we start a $t$-step random walk in $G_1$ ($G_2$) from a seed vertex $s_i$ ($s_i'$). We then compare the accumulated reaching probabilities from $s_i$ to $u$ and $s'_i$ to $v$ using min/max function; that is, the matching score of $(u,v)$ w.r.t. the seed pair $(s_i,s_i')$ is

\begin{equation}
\label{similarity}
\mathtt{Score}^t_i ( u, v) = \frac{\min( \sum_{k=1}^{t} p^k(s_i, u), \sum_{k=1}^{t} p^k(s'_i, v) )}{\max ( \sum_{k=1}^{t} p^k(s_i, u), \sum_{k=1}^{t} p^k(s'_i, v) )}
\end{equation}

Here, we define $\mathtt{Score}_i^t (u, v) = 0$ if both $\sum_{k=1}^{t} p^k(s_i, u)$ and $\sum_{k=1}^{t} p^k(s'_i, v)$ are $0$, so $\mathtt{Score}_i^t (u, v)\in[0,1]$.  The overall {\em matching score} of $[u,v]$ is defined as the sum of their matching scores w.r.t. all seed pairs, i.e., $\mathtt{Score}^t_{S}(u,v) = \sum_{i=1}^{|S|} \mathtt{Score}_i^t (u, v)$. Then, our algorithm will iteratively select the pair with the highest score to match.
Regarding the example in  Figure \ref{MatchingExample}, where
$S = \{ (s_1,s_1'), (s_2,s_2'), (s_3,s_3') \}$,  we have $\mathtt{Score}^1_S (u_2, v_1) = 3 $, $\mathtt{Score}^1_S (u_2, v_2) = 2 $, $\mathtt{Score}^1_S (u_1, v_1) = 2 $, $\mathtt{Score}^1_S(u_1, v_2) = 2 $, and the matching scores of all other pairs are $0$. Thus, if we stop at $t = 1$, we will choose the intuitively wrong pairs to match:
$u_2$ matches $v_1$ and then $u_1$ matches $v_2$.

However, if we continue the random walk to $t=2$, the order of scores may be reversed. One can verify that $\mathtt{Score}_S^2 (u_1, v_1) = 3 - \frac{2}{2 + \LB}$, $\mathtt{Score}^2_S (u_2, v_1) = 3 - \frac{3 \LB}{2 + \LB}$,
$\mathtt{Score}^2_S (u_1, v_2) = 2 - \frac{2}{2 + \LB}$, and $\mathtt{Score}^2_S (u_2, v_2) = 2$. Clearly, the largest matching score is either
$\mathtt{Score}^2_S (u_1, v_1)$ or $\mathtt{Score}^2_S (u_2, v_1)$. If we set $\LB$ to be larger than $\frac{2}{3}$, then $\mathtt{Score}^2_S (u_1, v_1) >  \mathtt{Score}^2_S (u_2, v_1)$; that is, $u_1$ matches $v_1$ and $u_2$ matches $v_2$. This meets the intuition.
In fact, regarding the above example, our numerical calculation also demonstrates that $\mathtt{Score}_S^\infty (u_1, v_1) > \mathtt{Score}^{\infty}_S (u_2, v_1)$ (i.e., the convergence values).

\noindent
\textbf{Contributions.}
The above is the basic idea of our algorithm and the principal conceptual contributions of our paper. We employ the \emph{Personalized PageRank} (PPR) \cite{page1999pagerank} to formalize this idea to provide a theoretic foundation towards convergence, fast algorithms, approximations, etc. The main contributions of the paper is summarized as follows.
\vspace*{-2mm}
\begin{itemize}
\item For seeded graph matching, we propose a new PPR-based score function to generate matching scores for each vertex pair.
The new paradigm directly employs high order structural information, which is general, flexible and easy to use. We also provide an analysis for the theoretical discrimination power of this match score function. Our extensive experiment results
demonstrate the new paradigm significantly improves the accuracy over state-of-the art techniques
\vspace*{-2mm}
\item We propose an optimization scheme for selecting a pair to match among all pairs sharing a common vertex. Our policy is to only match a pair which is very promising; that is, we match a pair only if its matching score is greater than $(1+\beta)$ times the matching scores of all its competitors (which share a common vertex with the pair) for some $\beta>0$. While our experiment shows it can bring significant boost in accuracy, we also theoretically prove that this scheme can improves the accuracy for a random graph model.
\vspace*{-2mm}
\item To improve the scalability of our framework, we develop efficient and effective approximation techniques based on approximation algorithms for PPR heavy hitters~\cite{wang2018efficient}, i.e., given a source node, the algorithm only reports the PPRs that are relatively large. Such techniques enable us to speed-up the matching computation by more than one order of magnitude while still retain the significant accuracy boost compared with state-of-the-art algorithms.
\vspace*{-2mm}
\item We conduct extensive experimental studies on various large-scale real-world graphs, which demonstrate that our new graph matching framework outperforms start-of-the-art algorithms regarding accuracy, robustness and efficiency.
\end{itemize}
\vspace*{-1mm}

\noindent
\textbf{Outline.} The related work immediately follows. Section 2 introduces notations and PPR.
Section 3 proposes a new matching score function and a basic graph matching algorithm.
Section 4 presents our PPRGM framework with unmatched advantages over the start-of-the-art methods, as well as effective optimization techniques to speed-up the computation.
It also presents the analysis of time complexity and effectiveness of our algorithm.
Section 5 evaluates all introduced algorithms using extensive experiments.
Finally we conclude the paper in Section 6.



\noindent
\textbf{Related Work.}
%
The graph matching problem is a generalization of the classic isomorphic mapping problem that is intractable in general \cite{lewis1983computers}. It has been widely used as a building block in various applications.
These include the alignment of protein-protein interaction networks in systems biology \cite{singh2008global, klau2009new, mohammadi2017triangular}, an identification of users (e.g., username, description, location and profile image) across different communities \cite{zafarani2009connecting, yartseva2013performance, korula2014efficient} in social networks, and de-anonymization which breaks the privacy of social networks \cite{backstrom2007wherefore, narayanan2009anonymizing, mislove2010you, wondracek2010practical, pedarsani2011privacy, ji2016seed}.

With the assumption of non-priori knowledge of the alignment, conventional graph matching algorithms firstly compute the similarity of each pair (all similarities are represented by a $|V_1|\times |V_2|$ matrix) and then identify the alignments with high scores.
\cite{singh2008global} builds the similarity matrix by iterative weighted propagation.
\cite{zhu2013high} consider both global vertex similarity $S_g$, defined by the similarity of corresponding eigenvectors from the spectral clustering, and local vertex similarity $S_{\ell}$, defined by the similarity of degree sequences of 2-neighborhood subgraphs.
Recent works \cite{wang2018deepmatching, zhou2018structure} propose to use deep-walk based graph embedding to compute the low-rank representation of each vertex. Then \cite{zhou2018structure} builds the node-to-node similarity matrix by defining a normalized Euclidean distance, while \cite{wang2018deepmatching} adopts Coherent Point Drift (CPD) method \cite{myronenko2007non} to identify the seeds and use \cite{korula2014efficient} to conduct seeded graph matching.

The seeded graph matching problem studied in this paper, however, assumes a small set of ``pre-matched'' pairs exists at the beginning. As discussed earlier,
the seed matches can be obtained by public link information \cite{zhang2015cosnet},
or by existing seed identification methods \cite{singh2008global, zhu2013high, wang2018deepmatching}.
A popular class of works are Percolation Graph Matching (PGM) \cite{korula2014efficient,yartseva2013performance,zhou2016cross}. 
While PGM is scalable and has a reasonable precision, traditional PGM methods may suffer from early-stop (low recall), especially when only a very small number of seeds are provided.
To prevent early termination, recently, ~\cite{kazemi2015growing} proposes a new strategy, namely ExpandWhenStuck (EWS).
Deviating from PGM, more recently, \cite{yacsar2018iterative} proposes to use seeds as anchor points to locate other vertices in a $2$-D space and then conduct mapping for the vertices in a same region. As we showed in
the introduction, the recent techniques in \cite{korula2014efficient,kazemi2015growing,yacsar2018iterative,zhou2016cross}
are all limited by solely looking at local information.

In approximate isomorphic mapping, it has been focused on maximizing the number of matched edges
\cite{singh2008global, klau2009new, zhu2013high}; and  \cite{mohammadi2017triangular} considers to maximize the number of matched triangles. A more related work \cite{zhu2013high} to ours is to first generate anchors and then extend the anchors to identify an approximate mapping.
It is quite slow as is demonstrated by our experiments. 

In our experiments, we compare our techniques with the state-of-the-art techniques \cite{kazemi2015growing,yacsar2018iterative,zhou2016cross,zhu2013high}.
The extensive experiments demonstrate that our algorithms significantly outperform these techniques. We did not compare \cite{korula2014efficient} since the authors in
 \cite{kazemi2015growing} already showed the advantage of EWS against \cite{korula2014efficient}.

%

\section{Preliminaries} \label{sec2}

\begin{table}[!t]
\renewcommand{\arraystretch}{1.1}
\centering
\caption{\label{table:notation}\bf{Notations and Descriptions}}
\vspace*{-1mm}
\resizebox{\columnwidth+0.1in}{!}{%
\begin{tabular}{l|l}
\noalign{\hrule height 1pt}
\textbf{Notation} & \textbf{Descriptions}\\
\hline
\hline
$G_1(V_1,E_1)$,  & The graphs $G_1$ and $G_2$ to be matched, with vertex  \\
    $G_2(V_2,E_2)$   &   sets $V_1$ and $V_2$, and edge sets $E_1$ and $E_2$. \\

\hline
$(u,u')$     & The edge between vertex $u$ and vertex $u'$.\\
\hline
$[u,v]$     & The vertex pair where $u\in V_1$ and $v\in V_2$.\\
\hline
$N(u)$      &  The neighbor set of vertex $u$.\\
\hline
$N(u,v)$    &  The neighboring pair set of $[u,v]$.\\
\hline
$\SP$    &  The probability that a random walk terminates \\
            & at the current vertex.\\
\hline
$\pi(s,u)$  &  The exact PPR value of $u$ with respect to $s$.\\
\hline
$R(u)$      &  The vector of PPR values of $u$ with respect to  \\
& all seed vertices.\\
\hline
$\mathtt{Score}_k(u,v)$ &  The matching score of a pair $[u,v]$ with respect  \\
            & to the $k$-th seed pair $[s_k,s'_k]$.\\
\hline
$\mathtt{Score}_S(u,v)$ &  The matching score of a pair $[u,v]$ with respect   \\
            & to the set $S$ of all seed pairs.\\
\hline
$r_{max}$   & The residue threshold for Forward-Push.\\
\hline
$r(s,u)$    &  The residue of $u$ during a Forward-Push from $s$.\\
\hline
$\pi^o(s,u)$&  The reserve of $u$ during a Forward-Push from $s$.\\
\hline
$H(s)$     &  The set of PPR heavy hitters obtained from a\\
            &  Forward-Push from $s$.\\

\noalign{\hrule height 1pt}
\end{tabular}
}
\vspace*{-3mm}
\end{table}

\subsection{Graph Notation} \label{21notation}
Notations used in this paper is summarized in Table~\ref{table:notation}. The explanation of some notations are provided here.
Consider two undirected graphs $G_1(V_1, E_1)$ and $G_2(V_2, E_2)$ to be matched.
We use $(u,u')\in E_{1} (E_2)$ to denote an edge between two $u,u'$, and $[u,v]$ to represent a pair of vertices where $u \in V_1$ and $v\in V_2$. $N(u)$ denotes the neighbors of vertex $u$. The neighboring pairs of a pair $[u,v]$ is the set of all pairs $[u', v']\in V_1 \times V_2$ such that $(u, u')\in E_1$ and $(v, v') \in E_2$, which is exactly $N(u)\times N(v)$. We use $N(u,v)$ to denote the set of neighboring pairs of $[u,v]$.  The initial seed set is $S=\{[s_1, s'_1], [s_2, s'_2], ..., [s_{|S|}, s'_{|S|}]\}$, where $s_k \in V_1$ and $s'_k \in V_2$ for $1\le k\le |S|$.

An undirected graph $G$ can be represented by an adjacency matrix $A$, where $A_{ij} =1$ if $(i,j)\in E$ and $A_{ij}=0$ otherwise. Note that $A$ is symmetric since $G$ is undirected.
The probability transition matrix $P$ defined on $G$ is given by
\[
    P_{ji}=
\begin{cases}
    \frac{1}{|N(i)|} & \text{if } (i,j)\in E,\\
    0              & \text{otherwise.}
\end{cases}
\]
We use $[\boldsymbol{x}]_i$ to denote the $i^{th}$ value of the vector $\boldsymbol{x}$.

\subsection{Personalized PageRank} \label{22PPR}
As motivated in the introduction, we will employ Personalized PageRank (PPR) to quantify the connections between the matched vertices and unmatched vertices.
PPR has been widely adopted in graph structure analysis, such as community detection~\cite{avron2015community}, graph ordering~\cite{benson2016higher, yin2017local}, and other applications~\cite{gleich2015pagerank}.

Given a source vertex $s\in V$ and a stopping probability $\SP$, 
a \emph{decaying random walk} is a traversal of $G$ that starts from $s$ and, at each step:
(1) with probability $\PP$, proceeds to a randomly selected neighbor of the current vertex, or
(2) with probability $\SP$, terminates at the current vertex.
For any vertex $u\in V$, its PPR value $\pi(s,u)$ w.r.t. source $s$ is the probability that a decaying random walk from $s$ terminates at $u$.

Starting from vertex $s$, let $q_{su}^{(t)}$ be the probability that the (non-decaying) random walk reaches vertex $u$ after $t$ steps. Since the probability that a decaying random walk doesn't terminate before step $t$ is $\PP^t$, the probability that the decaying random walk reaches vertex $u$ in the $t^{th}$ step is
$p_{su}^{(t)} = \PP^t \cdot q_{su}^{(t)}.$ Therefore,
$$\pi(s,u) = \alpha \cdot \sum_{t=0}^\infty \PP^t \cdot q_{su}^{(t)} $$

Such probabilities can be represented in matrix forms. Let $e_{s}\in \real^{|V|\times1}$ be the  $s^{th}$ standard basis vector, i.e. with $1$ at the $s^{th}$ position and $0$'s everywhere else. 
Let $P$ be the transition matrix defined in Section~\ref{21notation}, then
\[
q_{su}^{(t)} = [P^{t}\cdot e_{s}]_u, \ \  \textrm{and} \ \  p_{su}^{(t)} = [\PP^{t}\cdot P^{t}\cdot e_{s}]_u.
\]
Hence, the PPR value of vertex $u$ w.r.t. source $s$ is
\begin{equation}\label{ep_PPR}
\begin{split}
\pi(s, u) & = \SP\cdot\sum_{t=0}^{\infty}p_{su}^{(t)}= \SP \cdot \sum_{t=0}^{\infty}[{\PP}^{t}\cdot P^{t}\cdot e_{s}]_u\\ 
                                                      &=[(I-{\PP}P)^{-1}\cdot (\SP e_{s})]_u,
\end{split}
\end{equation}
where $I\in \real^{|V|\times |V|}$ is the identity matrix.
Computing exact PPR is time consuming, so approximation algorithms are widely studied and used in practice.

\section{Basic Algorithm} \label{Section3}

In this section, we present the basic idea of our graph matching algorithms, which will be refined in the following sections. 

\subsection{Matching Score Function} \label{31score}
The main ingredient is a new matching score function, which quantify the credibility to match any vertex pair.
Given a set $S$ of seed pairs,
we define the \emph{signature vector} of vertex $u\in V_1$ as $R(u)= \{\pi(s_1, u), \pi(s_2, u), ..., \pi(s_{|S|}, u) \}$,
and  
that of vertex $v\in V_2$ as $R(v)= \{\pi(s'_1, v), \pi(s'_2, v), ..., \pi(s'_{|S|}, v) \}$, where $[s_k,s'_k]\in S$ for $1\le k \le |S|$.

Intuitively, the closer the PPR values of $u$ and $v$ are (with respect to a seed pair $[s,s']$), the more likely $[u,v]$ is to be a valid match.
Obviously, on two isomorphic graphs $G_1$ and $G_2$, $\pi(s,u)=\pi(s',v)$ always holds provided that $[s,s']$ and $[u,v]$ are two valid matches.
%

Thus we propose to define the matching score of a pair $[u,v]$ with respect to the $k$-th seed pair $[s_k,s'_k]$ by
\begin{equation}\label{ep_score1}
\mathtt{Score_k}(u,v) = \frac{\min{(\pi(s_k, u),\pi(s'_k, v))}}{\max{(\pi(s_k, u),\pi(s'_k, v))}}.
\end{equation}
We let $\mathtt{Score_k}(u,v)=0$ if both $\pi(s_k, u)$ and $\pi(s'_k, v)$ are $0$.


The matching score of $[u,v]$ with respect to the seed set $S$ is simply the summation of scores over all seed pairs:
\begin{equation}\label{ep_score2}
\mathtt{Score}_S(u,v) = \sum_{k=1}^{|S|} \mathtt{Score_k}(u,v).
\end{equation}
This utilizes all the seed information and makes the score function more robust against possible wrong matches in $S$.

\subsection{Basic Graph Matching Algorithm}\label{32basic}

%
%
%
%
%

Based on the proposed matching score function, we give a basic graph matching algorithm which greedily matches pairs with highest scores. 
For each seed pair $[s, s']$, the algorithm marks $s$ and $s'$ as matched;
two PPR computations are conducted on $s$ and $s'$ respectively with the same stopping probability $\SP$, which
return $\pi(s,u)$ for all $u\in V_1$ and $\pi(s',v)$ for all $v\in V_2$. 
Next, for each pair $[u,v]\in V_1\times V_2$,  the algorithm computes $\mathtt{Score}_S(u,v)$ based on~\eqref{ep_score1} and~\eqref{ep_score2}.
Initially all possible pairs, except for those containing vertices that have been marked as matched, are inserted into the candidate set $C$.
Then, the algorithm matches pairs iteratively: at each step, it greedily picks a pair $[u,v]\in C$ with the largest $\mathtt{Score}_S(u,v)$,
and then removes all pairs in $\{u\}\times V_2 \cup V_1\times\{v\}$ from $C$; $u$ and $v$ are marked as matched.
The algorithm terminates when $C$ becomes empty.

\vspace{0.1cm}
\noindent
\textbf{Limits of basic graph matching.}
The above naive algorithm has two critical issues.
Firstly, it has high computational costs and thus not scalable: computing all PPR values from all seed vertices is time-consuming;
even if the PPR values are given for free, computing matching scores for all $[u,v]\in V_1\times V_2$ takes $O(|S|\cdot |V_1|\cdot|V_2|)$ time.
Secondly, the score function of a pair $[u,v]$ only depends on the seed pairs, which is potentially insufficient for computing high-quality matching when the number of initial seeds is very limited.

\section{PPRGM Algorithm}\label{4approx}



To improve the time efficiency, we do not compute all PPR values and the matching scores in the beginning. Instead, PPRGM adopts a greedy expansion mechanism similar to the percolation process in PGM to generate candidates iteratively; only the matching scores of these candidate pairs will be computed. During this process, candidate pairs that satisfy certain criteria are marked as \emph{matched}. To resolve the second issue mentioned above, in addition to the initial seeds, the matching score of a candidate pair will be augmented by the information of early matches. More precisely, the matching score of a candidate pair is now the summation of the score w.r.t. initial seeds and the score w.r.t. earlier matches that are added by the algorithm.


%
%
%

\subsection{Main Framework}\label{32PPRGM}
To generate candidates, we use a similar idea as in PGM. Given any matched pair $[s,s']$ (either a seed pair or an early match), the hypothesis adopted by PGM is that the neighboring pairs of $[s,s']$ are more likely to be valid matches. So, in each iteration, PGM simply adds the neighboring pairs of any newly matched pair as candidates.  
In our framework, we propose more general and effective strategies for generating candidate pairs, which is also based on the idea of PPR.
The matching score of a candidate pair $[u,v]$ w.r.t. each matched pairs $[s,s']$ will be computed according to~\eqref{ep_score1}. Let $M$ be the set of all the initial seed pairs and the pairs matched by the algorithm, then the matching score of $[u,v]$ is computed by $\mathtt{Score}_M(u,v) = \sum_{k=1}^{|M|} \mathtt{Score}_k(u,v)$.
\begin{algorithm}[!tb]
\caption{Personalized PageRank based Graph Matching (PPRGM)}\label{alg:PPRGM}
    \SetKwInOut{Input}{Input}
    \SetKwInOut{Output}{Output}
    \Input {$G_1(V_1, E_1), G_2(V_2, E_2)$, set $S$ of seed matches, stopping probability $\SP$.}
    \Output {The set of matched pairs $M$.}

    Let $C\leftarrow\Emptyset$ store the candidate pairs; \\ 

    Let $M'\leftarrow\Emptyset$ store the pairs that are matched but have not been added to $M$;

    \ForEach {seed pair $[s_k, s'_k]\in S$}{
       Insert $[s_k, s'_k]$ into $M'$; mark $s_k, s'_k$ as matched;\\
    }

    \While{$True$}{
        \ForEach {pair $[u,v]\in M'$}{
            Insert $[u,v]$ into $M$;\\
            \textbf{Candidate-Set-Expansion($u, v, C$)};\\
    	}

        $M'\leftarrow\Emptyset$. \\
        \ForEach {$[u,v]\in C$ satisfying matching criteria}{
                Insert $[u,v]$ into $M'$; mark $u, v$ as matched;\\
                Remove pairs in $\{u\}\times{V_2}$ and ${V_1}\times \{v\}$ from $C$;\\
        }
        \If {$M'$ is empty}{
            Relax the matching criteria, or break the while-loop if the matching criteria cannot be further relaxed;\\
        }
    }
    \Return {M}

\end{algorithm}

The main PPRGM framework is presented in Algorithm~\ref{alg:PPRGM}.
Given two graphs $G_1(V_1,E_1)$ and $G_2(V_2,E_2)$ and a set $S$ of seed matches.
The algorithm initializes a set $C$ to store the candidate pairs, 
and a set $M'$ to store the pairs that are matched but have not been added to $M$ (line 1,2).
Firstly, PPRGM inserts each seed pair $[s_k,s'_k]\in S$ into $M'$, and marks $s_k$ and $s'_k$ as matched (line 3-5).
Then PPRGM starts a while-loop which iteratively expands the candidate set and matches vertex pairs: 
(1) For each pair $[u,v]\in M'$, the algorithm inserts $[u,v]$ into $M$, and calls a \emph{Candidate-Set-Expansion} procedure, which expands the candidate set $C$ by adding all neighboring pairs of $[u,v]$ to $C$ (which will be further discussed in Section~\ref{43candexpansion});
$M'$ is then set to empty (line 7-11);
(2) For each pair $[u,v]\in C$ that satisfies certain matching criteria, PPRGM inserts $[u,v]$ into $M'$ and marks $u$ and $v$ as matched, then the candidate pairs that belong to $\{u\}\times{V_2}$ or ${V_1}\times \{v\}$ are removed from $C$ (line 12-15). In our framework, the matching criteria are parameterized by two parameters $\gamma, \beta$, which will be discussed shortly. 

To reduce the chance of matching wrong pairs in the early stages, we introduce the \emph{postponing strategy}. Roughly speaking, we will apply more strict matching criteria in the early stages so that only very promising pairs are selected and the decisions that whether to match some ``uncertain" candidate pairs are postponed. The matching criteria will be relaxed periodically so that more pairs can be matched (line 16-18).
\subsection{Matching Criteria and Postponing Strategy}
We define \emph{adversary pairs} of $[u,v]$ as the set of all pairs in $\{u\}\times V_2$ and $V_1\times \{v\}$ excluding $[u,v]$, which are the direct competitors of $[u,v]$. Note that a pair will not be matched if any one of its adversary pairs is matched first. A simple idea is then to match the pairs whose adversary pairs in the candidate set all have smaller matching scores.

Suppose $[u,v]$ is a correct pair. Since the candidate pairs are generated and matched iteratively, the following \emph{undesirable cases} may happen:

(1) $[u,v']\in C$ but currently $[u,v]\notin C$, although $\mathtt{Score}(u,v)>\mathtt{Score}(u,v')$; 

(2) $[u,v]\in C$ and $[u,v']\in C$, but currently $\mathtt{Score}(u,v')$ is slightly higher than $\mathtt{Score}(u,v)$.

To reduce the chance of making wrong decisions in these cases (especially in the early stages), we adopt a  \emph{postponing} strategy, which aims to postpone the matching of $uncertain$ pairs.
We call $[u,v]$ a \emph{$(\gamma,\beta)$-strong pair}\footnote{if a pair is not strong, then it is uncertain}, if:
(1) $\mathtt{Score}(u,v)> \gamma$ where $\gamma$ is a predefined score threshold;
(2) $[u,v]$ has no $\beta$-close adversary candidate pairs, where $\beta$-close adversary is defined as follows.

\begin{definition}
	For any score function $\mathtt{Score}(,)$ and some $\beta>0$, we say a candidate pair $[u',v']$ is a $\beta$-close adversary pair of another candidate pair $[u, v]$, if $\mathtt{Score}(u,v)\le (1+\beta)\cdot \mathtt{Score}(u',v')$ and $[u',v']$ is an adversary pair of $[u,v]$.
\end{definition}
Then we say a pair $[u,v]$ has no $\beta$-close adversary (candidate) pairs,
if $\mathtt{Score}(u,v)>(1+\beta)\cdot \mathtt{Score}(u',v')$ for all adversary pairs $[u',v']$ of $[u,v]$ that are also in the candidate set $C$.



If a pairs is not strong then it is uncertain.
In each iteration, \emph{the algorithm postpone the matching of any uncertain pair, even if it has the highest matching score, and only matches strong pairs in the candidate set.} One should observe that, using this strategy, there will be no tie. If $\gamma$ and $\beta$ are set to be large, we may avoid wrongly matching pair $[u,v']$ in the above two undesirable cases.
On the other hand, after PPRGM adds more pairs to $M$ and expands the candidate set, the matching priorities of $[u,v]$ and $[u,v']$ might be reversed, and the correct match would be identified.

We employ strict matching criteria by assigning $\beta$ and $\gamma$ with relatively large values in the beginning (Empirically, we set $\beta=1$ and $\gamma=|S|/2$) to only match the most ``certain" pairs. We will relax the criteria by decreasing $\beta$ and $\gamma$ ($\beta=\beta/2$, $\gamma=(\gamma+1)/2$), when there are currently no vertex pair in $C$ that satisfies the matching criteria.

The effectiveness of the postponing strategy will be analyzed at the end of this section.

\subsection{PPR Heavy Hitters and Forward-Push}\label{42expansion}
Recall that, in PPRGM, each time a pair $[u,v]$ is inserted to $M$ (i.e., matched), its neighborhood will be added to the candidate set. Computing the score function of a new candidate requires computing $|M|$ PPR values, which is too high when $M$ becomes large. Note that, typically, $|M|$ could be $O(|V|)$, so even if a single PPR query can be done in $O(1)$ time, the total running time is $O(|V|\cdot |M|) = O(|V|^2)$.

On the contrary, we precompute the PPR values. More specifically, every time a new pair $[u, v]$ is added to $M$, our algorithm conducts a single source PPR computation for $u$ and $v$ respectively, which computes the PPR values for all vertices; the PPR values are then stored in the memory. However this still takes too much time and space. As a single source PPR computation takes at least $O(|V|)$ time, the total time is still $O(|M|\cdot |V|) = O(|V|^2)$, and we also need so much space to store all the PPRs.

Our solution is to only compute and store the ``heavy hitters" of all PPR values, i.e., PPRs that are relatively large. Computing the heavy hitters can be much faster than a full single-source PPR computation~\cite{wang2018efficient}. Hence, in PPRGM, when a pair $[u, v]$ is added to $M$, we compute and store the PPR heavy hitters of $u$ and $v$, and regard all non-heavy PPRs as $0$. This approximate approach is time and space efficient, and produces high-quality matchings as shown in our experimental studies.

\begin{algorithm}[!tb]
\caption{Forward-Push method}\label{alg:FP}
    \SetKwInOut{Input}{Input}
    \SetKwInOut{Output}{Output}
    \Input {$G(V, E)$, source vertex $s$, stopping probability $\SP$, residue threshold $r_{max}$}
    \Output {$H(s)$ containing PPR heavy hitters and their approximate PPR values} 

    set $r(s, s)$ $\leftarrow$ $1$, and $r(s,u)$ $\leftarrow$ $0$ for all vertices $u\ne s$;

    set $\pi^o(s, u)$ $\leftarrow$ $0$ for all $u\in V$.

    \While{$\exists u\in V$ such that $r(s,u)/|N(u)| > r_{max}$}{
        $\pi^o(s,u)$ $\leftarrow$ $\pi^o(s,u)+\SP\cdot r(s,u)$;

        \ForEach {$v\in N(u)$}{
            $r(s, v)$ $\leftarrow$ $r(s,v) + \PP\cdot \frac{r(s,u)}{|N(u)|};$
        }

        $r(s,u)$ $\leftarrow$ $0;$
    }
    \Return {$H(s)=\{u|\pi^o(s,u)>0, u\in V\}$}
\end{algorithm}

\vspace{0.1cm}
\noindent
\textbf{Forward-Push.}
Although more efficient algorithms exist for computing PPR heavy hitters~\cite{wang2018efficient}, we use the classic Forward-Push algorithm of~\cite{andersen2007local}. This is because Forward-Push is a local algorithm (only explores a local neighborhood of the source vertex), which is simple, efficient and deterministic. Moreover, the accuracy and running time of Forward-Push are controlled by a residue threshold $r_{max}$, which is suitable for our purpose. Given a  threshold $r_{max}$, Forward-Push returns an approximate heavy hitter set together with their approximate PPR values. Next we give a brief overview of the Forward-Push algorithm.

The pseudo code is presented in Algorithm~\ref{alg:FP}.
It starts a PPR approximation on source vertex $s$.
Given a graph $G(V,E)$, the source vertex $s$, a stopping probability $\SP$, and a residue threshold $r_{max}$. The Forward-Push algorithm defines two variables on each vertex in graph all the time: (1) the reserve $\pi^o(s,u)$ represents the current estimation of the probability that a random walk terminates at vertex $u$ (i.e. the PPR value), which is always an underestimation of $\pi(s,u)$; (2) the residue $r(s,u)$ represents the probability that the random walk is currently at vertex $u$.
Note that $\sum_{u\in V}{\pi^o(s,u)}+\sum_{u\in V}{r(s,u)}=1$ holds at all times (the intuitive explanation is that the random walk either has terminated at some vertex or is currently at some vertex).
Initially $r(s,s)$ is set to $1$ while $\pi^o(s,u)=0$, $\forall u\in V$, and the algorithm iteratively increases $\sum_{u\in V}{\pi^o(s,u)}$ to have better approximations to the PPR values.
In each iteration, the algorithm finds a vertex $v$ with $r(s,u)/|N(u)|>r_{\max}$\footnote{When there are multiple vertices satisfying the condition $r(s,u)/|N(u)|>r_{\max}$, the algorithm always picks the vertex $u$ with \emph{maximum} $r(s,u)/|N(s)|$, thus the number of iterations is greatly reduced as each iteration makes the most progress.} (line 3), then distributes the value of $r(s,u)$ to $u$'s neighbors: (1) it increases the reserve $\pi^o(s,u)$ by $\SP \cdot r(s,u)$ (line 4); (2) it adds $\PP \cdot \frac{r(s,u)}{|N(u)|}$ to each residue $r(s,v)$ for each $v\in N(u)$ (line 5-7); (3) $r(s,u)$ is set to $0$.
The Forward-Push algorithm terminates when $r(s,u)/|N(u)|\le r_{\max}$ for all $u\in V$ simultaneously.
A set $H(s)$ is returned containing the PPR heavy hitters together with their approximate PPR values.

To sum up, each time a pair $[u,v]$ is added to the matching set $M$, PPRGM invokes Forward-Push to compute and store  PPR heavy hitters of $u$ and $v$, and then expands the candidate set accordingly.

\begin{procedure}[!tb]
\caption{3: High-Order-Expansion($u, v, C$)}\label{proc:HOE}
    $H(u)$  $\leftarrow$  Forward-Push$(G_1, u, \SP, r'_{max})$;\\
    $H(v)$ $\leftarrow$  Forward-Push$(G_2, v, \SP, r'_{max})$;

	\ForEach {pair $[u',v']\in H(u)\times H(v)$}{
    	\If {either $u'$ or $v'$ is matched}{
            Continue;
        }
        $\mathtt{sc}$ $\leftarrow$ $\frac{\min{(\pi^o(u, u'),\pi^o(v, v'))}}{\max{(\pi^o(u, u'),\pi^o(v, v'))}+\sigma}$;\\

        \eIf {$[u',v']\in C$}{
            $\mathtt{Score}(u',v')$ $\leftarrow$  $\mathtt{Score}(u',v')$ + $\mathtt{sc}$;
        }
        {
            $\mathtt{Score}(u',v')$ $\leftarrow$ $\sum_{k=1}^{|S|} \mathtt{Score_k}(u',v')$ + $\mathtt{sc}$;\\
            Insert $[u',v']$ into set $C$;\\
        }

    }
\end{procedure}
%
%
%
%

\subsection{Candidate Set Expansion}\label{43candexpansion}
Let $H(u)$ and $H(v)$ be the PPR heavy hitter set of $u$ and $v$ (returned by Forward-Push) respectively.  PPRGM will add all pairs in $H(u)\times H(v)$, which haven't been added before by other matched pairs, to the candidate set $C$. If a pair in $H(u)\times H(v)$ has already existed in $C$, its matching score is updated according to the PPRs w.r.t. $u$ and $v$. Otherwise, we initialize the match score of this pair.

We consider the initial seed pairs as the most valuable information, and thereby treat seed pairs differently from matched pairs added by the algorithm. In particular, the matching score of a pair $[u,v]$ is divided into two parts. The first part is the score w.r.t. seed pairs,  for which we compute the PPR values with higher accuracy (smaller residue threshold in Forward-Push); the second part is w.r.t. other matched pairs, for which we use lower-accuracy PPR values, which will speedup the computation significantly.

More specifically, at the beginning of the algorithm, we conduct a Forward-Push on each seed vertex $s$ ($s'$) with a relatively small $r_{\max}$. All the PPR values w.r.t. seed vertices are stored (using $O(|S|/r_{\max})$ space).
The basic matching score of a pair $[u,v]$ w.r.t. all initial seed pairs is $\sum_{k=1}^{|S|} \mathtt{Score_k}(u,v)$. However, to save space and time, the algorithm will not compute and store the basic matching score of $[u,v]$ until $[u,v]$ is added to $C$ as a candidate pair.
When a new pair is matched (added to $M$) during the algorithm, we conduct a Forward-Push for each of the two end vertices with a relatively large $r'_{max}$ to compute their PPR heavy hitters and then add the heavy hitters to $C$.
Since in general $H(v)$ contains a high-order neighborhood of $u$, we call this candidate set expansion mechanism as \emph{High-Order (Neighbor) Expansion}.

\vspace{0.1cm}
\noindent
\textbf{The High-Order Expansion and Neighbor Expansion.} We present the details of High-Order Expansion (HOE, see Procedure 3).
For each matched pair of $[u,v]\in M'$, the algorithm conducts Forward-Push procedures on $u$ and $v$ respectively,
using the same $\SP$ and the same residue threshold $r'_{max}$ (line 1-2).
The Forward-Push procedures return $H(u)$ and $H(v)$, which contain the PPR heavy hitters of $u$ and $v$ respectively, as well as their approximate PPR values.
Then HOE considers each vertex pair $[u',v']\in H(u)\times H(v)$ (i.e., $[u',v']$ is infected by $[u,v]$).
If either $u'$ or $v'$ is already matched, the algorithm does nothing and continues to consider the next infected pair (line 4-6).
A score $\mathtt{sc}$ of $[u',v']$ w.r.t. $[u,v]$ is computed based on approximate PPR values $\pi^o(u,u')$ and $\pi^o(v,v')$
(line 7, the parameter $\sigma$ will be explained shortly).
If we have $[u',v']\in C$ already, HOE simply updates the matching score of $[u',v']$ by adding $\mathtt{sc}$ to it (line 8-9).
If $[u',v']\not\in C$, then HOE computes the basic matching score of $[u',v']$ w.r.t. all initial seed pairs for the first time,
and let the matching score of $[u',v']$ be the sum of its basic matching score and $\mathtt{sc}$; lastly, HOE adds $[u',v']$ into the candidate set $C$ (line 11-12).

We also provide a more efficient variant, namely the Neighbor-Expansion (NE).
The NE method is more reserved in both candidate expansion and impact expansion.
Given an unused pair of matched vertices $[u,v]\in M'$, NE conducts a $1$-step Forward-Push on $u$ and $v$ respectively, i.e.,
it assigns each $u'\in N(u)$ with $\pi^o(u,u')=\frac{\SP \PP}{|N(u)|}$, and each $v'\in N(v)$ with $\pi^o(v,v')=\frac{\SP \PP}{|N(v)|}$.
Then NE expands the candidate sets by adding each pair $[u',v']\in N(u)\times N(v)$ to $C$ for $[u',v']$ not in $C$ yet; the score of $[u',v']$ is set/updated accordingly.


Given an unused matched pair $[u,v]$, if we use NE for candidate expansion, then, for each neighboring pair $[u',v']\in N(u)\times N(v)$, the matching score $\mathtt{Score}(u', v')$ is incremented by
$$\frac{\min{(\pi^o(u, u'),\pi^o(v, v'))}}{\max{(\pi^o(u, u'),\pi^o(v, v'))}}=
\min\{\frac{|N(u)|}{|N(v)|},\frac{|N(v)|}{|N(u)|}\}.$$
Clearly, using the NE strategy, PPRGM consider a pair $[u,v]$ with smaller relative degree difference as a stronger evidence for later matching. This is interesting as it is in the same spirit as a heuristic used in several previous graph matching algorithms, which prefers pairs with smaller degree difference, e.g.~\cite{kazemi2015growing,yacsar2018iterative}. Thus our PPR-based expansion mechanism (including NE and HOE) and score function essentially applying such intuitive heuristics implicitly, which explains its superior performance in some sense.

\vspace{0.1cm}
\noindent
\textbf{Comparison of HOE and NE.}
The NE method is more efficient than HOE, because it only adds order-1 neighboring pairs and consequently examines less number of candidate pairs overall. HOE could potentially identify more correct matches than NE, i.e., higher recall, since more pairs are added to the candidate set in each expansion.
Detailed experimental comparison between HOE and NE, as well as the effect of various $r'_{max}$ will be presented in the experiment section.

\subsection{A Robust Matching Score Function}
In practice, the score function~\eqref{ep_score1} is not robust against noise and may suffer from numerical issues. Suppose $p_1, p_2$ are two small PPR values, then small additive errors in computing $p_1$ and/or $p_2$ will change the score $\frac{\min{(p_1,p_2)}}{\max{(p_1,p_2)}}$ significantly, which makes the score very unstable. Such errors can be caused by noise in data, limited numerical precision or by approximation errors from PPR computations (i.e., the Forward-Push method).
Therefore, we introduce a smoothing parameter $\sigma$ to improve its robustness. Formally, we define the matching score of $[u,v]$ w.r.t. a seed pair (or a matched pair) $[s_k, s'_k]$ as
\begin{equation}\label{ep_score3}
\mathtt{Score_k}(u,v) = \frac{\min{(\pi(s_k, u),\pi(s'_k, v))}}{\max{(\pi(s_k, u),\pi(s'_k, v))}+\sigma}.
\end{equation}

The extra smoothing term $\sigma$ in the denominator reduces the impact from small PPR values on the matching scores, and thus makes the score much more stable. This might also improve the matching precision since the vertices with small PPR values are often far away from the corresponding seed, and thus two similar but small PPR values may not be a good evidence for matching.


We empirically set $\sigma$ to be $10$ times the residue threshold $r_{max}$ used in Forward-Push.

\subsection{Analysis} \label{33analysis}

\vspace{0.1cm}
\noindent
\textbf{Time complexity analysis.}
We assume the size of both input graphs are of the same order, i.e., both contain $O(|V|)$ vertices and $O(|E|)$ edges for simplicity. We also assume $|S|\ll |V|$ (only a small number of seeds are provided) and $r_{max}\ll r'_{max}$.
The time complexity of our algorithms are summarized as follows.

\begin{lemma}\label{lemma:time}
Given graphs $G_1$ and $G_2$ to be matched, each with $O(|V|)$ vertices and $O(|E|)$ edges, and a set $S$ of seed pairs, the running of PPRGM is $O(\min (\frac{|S|}{r^2_{\max}}, |S||C|)+\frac{|V|}{r'^2_{\max}})$,  which is bounded by $O(\min (\frac{|S|}{r^2_{\max}}, \frac{|S||V|}{r'^2_{\max}}))$ in the worst case,
where $r_{max}$ and $r'_{max}$ are the residue thresholds used in Forward-Push on seed vertices and matched vertices respectively.
\end{lemma}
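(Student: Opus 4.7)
The plan is to decompose the running time of PPRGM into three pieces and bound each one separately: (i) the initial Forward-Push computations on the $2|S|$ seed vertices with residue threshold $r_{\max}$, (ii) the Forward-Push computations on matched vertices (with threshold $r'_{\max}$) together with the candidate-expansion work performed by HOE, and (iii) the one-time initialization of the ``basic'' part $\sum_{k=1}^{|S|}\mathtt{Score}_k(u',v')$ of a candidate's matching score. Throughout I would invoke the standard amortized bound for Forward-Push: with residue threshold $\rho$ it runs in $O(1/(\alpha\rho))=O(1/\rho)$ time and returns at most $O(1/\rho)$ heavy hitters, because each push operation removes at least $\alpha\rho$ units of residue from the potential $\sum_u r(s,u)=1$.

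For part (i), doing one Forward-Push per seed vertex costs $O(|S|/r_{\max})$, which is dominated by $|S|/r_{\max}^2$. For part (ii), at most $|M|=O(|V|)$ pairs are ever matched, each triggers two Forward-Pushes of cost $O(1/r'_{\max})$ and an enumeration over $H(u)\times H(v)$ of size $O(1/r'^{2}_{\max})$; updating the score of an already existing candidate or inserting a brand-new candidate into $C$ costs $O(1)$ per Cartesian-product entry. Summing over matched pairs gives $O(|V|/r'^{2}_{\max})$, which is the second term of the claimed bound (the smaller $|V|/r'_{\max}$ cost of the Forward-Pushes themselves is absorbed).

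The interesting piece is part (iii), which is where the $\min$ comes from. I would charge the work of computing the basic seed-score of every newly inserted candidate in two complementary ways. On the one hand, for each of the $|C|$ distinct candidates ever inserted, a direct evaluation of $\sum_{k=1}^{|S|}\mathtt{Score}_k(u',v')$ uses $O(|S|)$ table lookups into the stored seed PPRs, giving a total of $O(|S|\cdot|C|)$. On the other hand, an equivalent precomputation iterates once over each seed pair $[s_k,s'_k]$ and touches only entries in $H(s_k)\times H(s'_k)$, since all other pairs have a zero contribution from seed $k$; the total work is $O(\sum_{k=1}^{|S|}|H(s_k)|\cdot|H(s'_k)|)=O(|S|/r_{\max}^{2})$. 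Since the real work is at most either of these estimates, it is at most their minimum, yielding the $O(\min(|S|/r_{\max}^{2},\,|S||C|))$ term.

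Adding (i)+(ii)+(iii) gives the main bound $O(\min(|S|/r_{\max}^{2},|S||C|)+|V|/r'^{2}_{\max})$. For the worst-case form, I would use that every candidate is inserted while enumerating some $H(u)\times H(v)$ with $[u,v]\in M$, so $|C|\le |M|\cdot O(1/r'^{2}_{\max})=O(|V|/r'^{2}_{\max})$; substituting and absorbing the additive $|V|/r'^{2}_{\max}$ into $|S|\cdot|V|/r'^{2}_{\max}$ (using $|S|\ge 1$) yields $O(\min(|S|/r_{\max}^{2},\,|S||V|/r'^{2}_{\max}))$. The main obstacle will be the clean justification of the $\min$: one has to make sure that the ``dual'' accounting really is valid, i.e.\ that the score-update work done inside HOE (part ii) and the seed-score initialization work (part iii) are disjoint, so that the two charging schemes in (iii) can each be chosen independently without double counting.
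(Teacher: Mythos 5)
Your proposal is correct and follows essentially the same route as the paper's proof: the same decomposition into seed-vertex Forward-Pushes, matched-vertex Forward-Pushes plus $H(u)\times H(v)$ enumeration, and seed-score initialization, with the $\min$ term obtained by the same dual accounting (per-candidate $O(|S|)$ lookups versus per-seed $O(1/r_{\max}^2)$ heavy-hitter products) and the worst-case form via the same bound $|C|=O(|V|/r'^2_{\max})$. Your extra care about the two charging schemes not double-counting is a reasonable refinement but does not change the argument.
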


\begin{proof}
The running time of PPRGM is dominated by Forward-Push and the computation of matching scores, which will be analyzed separately.

The time complexity of Forward-Push  (Algorithm~\ref{alg:FP}) is determined by the residue threshold $r_{\max}$. 
The authors of~\cite{andersen2006local} prove that Forward-Push takes $O(1/r_{\max})$ time.
Our algorithm employs a smaller residue thresholds $r_{\max}$ in Forward-Push for seed vertices, which takes $O(|S|/r_{\max})$ time in total.  Additionally, PPRGM conducts at most one Forward-Push for all vertices with a larger $r'_{\max}$, thus the time needed is $O({|V|}/{r'_{\max}})$ in total. Therefore, the total time for Forward-Push is $O(|S|/r_{\max} + {|V|}/{r'_{\max}})$.
Note if each matched vertex is only allowed to expand for exactly one step (i.e., the NE method), then all Forward-Push processes take $O(\frac{|S|}{r_{max}}+|E|)$ time. 

 A single Forward-Push from a vertex with a residue threshold $r'_{max}$ assigns at most $O(\frac{1}{r'_{max}})$ vertices with a positive PPR value~\cite{andersen2006local}. Therefore, the time to compute matching scores w.r.t. the PPR values assigned by a matched pair $[u,v]$ is $O(\frac{1}{r'^2_{max}})$.
Regarding a maximum number of $|V|$ matched pairs, the cost is $O(\frac{|V|}{r'^2_{max}})$.

Next, we analyze the time for computing matching scores w.r.t. initial seeds.
There are at most $O(\frac{1}{r_{max}})$ vertices with a positive PPR value w.r.t. any seed vertex. Therefore, the time complexity of matching score computation is bounded by $O(\frac{1}{r^2_{max}})$ w.r.t. each seed pair and totally $O(\frac{|S|}{r^2_{max}})$ for all $|S|$ seed pairs. On the other hand, one can easily check this time is also bounded by $O(|S|\cdot |C|)$, where $|C|$ is the total number of candidate pairs. Consequently, the time for computing matching scores w.r.t. initial seeds is bounded by $O(\min(|S||C|,\frac{|S|}{r^2_{max}}))$. Typically, we have $|C|\le \frac{1}{r^2_{max}}$ in our experiments.
We claim $|C|$ is bounded by $O(\frac{|V|}{r'^2_{max}})$.
Because the number of matched pairs is bounded by $O(|V|)$ and a candidate expansion on a matched pair leads to $O(\frac{1}{r'^2_{max}})$ more candidate pairs (the set of $H(u)\times H(v)$).

In conclusion, the overall time complexity of our algorithm is dominated by the cost of matching score computation, which is
$O(\min (\frac{|S|}{r^2_{\max}}, |S||C|)+\frac{|V|}{r'^2_{\max}})$. In the worst case, this is bounded by $O(\min (\frac{|S|}{r^2_{\max}}, \frac{|S||V|}{r'^2_{\max}}))$
\end{proof}

In the our experiments,  we always set $r_{\max} \approx \frac{2\cdot |S|}{|V|}$.
Then there are totally $\frac{|S|}{r_{\max}}=O(|V|)$ non-zero PPR values from all seed pairs, and thus, on average, each vertex gets only $O(1)$ PPR values from seed nodes. Employing hash tables, the matching score computation for a candidate pair $[u,v]$, where vertex $u$ has $m$ positive PPR values and vertex $v$ has $n$ positive PPR values, takes $O(\min(m,n))$ time. Therefore, the computation time of the matching scores for a typical candidate w.r.t. all seed pairs is $O(1)$ in the average sense, and takes $O(|C|)$ time in total for all candidate pairs.\footnote{One can construct a bad instance, which still has worst case time $O(|S||C|)$. }

As discussed above, in the PPRGM framework, an upper bound for $r'_{max}$ should be $O(|V|/|E|)$, corresponds to the NE method; the overall time complexity increases as smaller $r¡¯_{max}$ is used, which corresponds to larger range of expansion and more candidate pairs.


\begin{figure}[!tb]
\begin{centering}
\includegraphics [width=3.0in]{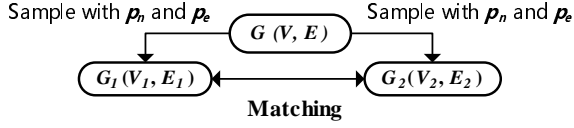}
\vspace*{-2mm}
\caption{$G(n,p;p_n,p_e)$ Random Graph Model. }
\label{randomgraph}
\vspace*{-4mm}
\end{centering}
\end{figure}

\vspace{0.1cm}
\noindent
\textbf{Discrimination power of PPR.}
Let $[s,s']\in S$ be a seed pair and let $u\in V_1$ and $u'\in V_2$ be a valid pair. Recall the definition of $q_{su}^{(t)}$, which is the probability that a non-decaying random walk starting from $s$ reach $u$ after $t$ steps. Let $\chi_u^{L}$ be the $L$-dimensional vector $ \{q_{su}^{(1)},\cdots,q_{su}^{(L)}\}$. The vector $\chi_u^{L}$ provides rich structural information of $u$ w.r.t. to the source vertex $s$, which is often used as a characteristic vector in ranking tasks (see e.g.,~\cite{kloumann2017block}). In particular, when the two graphs $G_1$ and $G_2$ to be matched are isomorphic to each other, then $\chi_u^{L}=\chi_{u'}^{L}$; 
conversely, for any vertex $v$ s.t. $[u,v]$ is not a valid matching, one would expect $\chi_u^{L}\neq \chi_{v}^{L}$, i.e., $q_{su}^{(t)}\neq q_{sv}^{(t)}$ for at least one $t\in[L]$ (for large enough $L$).
However, directly comparing the characteristic vectors of $u$ and $v$ to decide whether they are a true match is inefficient. Hence it is crucial to devise a \emph{discriminant function} $f:\real^{L}\rightarrow \real$ such that, whenever $\chi_u^{L}\neq \chi_v^{L}$, it holds $f(\chi_u^{L})\neq f(\chi_v^{L})$.  The truncated PPR value is such a discriminant function, which is the weighted combination of the entries $\pi^L(s,u)  = \alpha \cdot \sum_{t=0}^L \PP^t \cdot q_{su}^{(t)}$.
\begin{lemma}
	Suppose $\chi_u^{L}\neq \chi_v^{L}$, and $\alpha$ is randomly chosen from the interval $[0,1]$, then
	$\Pr[\pi^L(s,u) = \pi^L(s,v)]=0.$
\end{lemma}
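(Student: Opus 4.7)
\medskip
\noindent
\textbf{Proof proposal.} The plan is to recast the event $\pi^L(s,u)=\pi^L(s,v)$ as a polynomial equation in the single variable $\alpha$, and then argue that the polynomial in question is not the zero polynomial, so that its zero set is finite and therefore has Lebesgue measure zero in $[0,1]$. Concretely, I would form the difference
\[
f(\alpha) \;=\; \pi^L(s,u)-\pi^L(s,v) \;=\; \alpha\sum_{t=0}^{L}(1-\alpha)^{t}\bigl(q_{su}^{(t)}-q_{sv}^{(t)}\bigr),
\]
using the expansion of $\pi^L$ given earlier in the paper. Expanding each $(1-\alpha)^{t}$ by the binomial theorem, $f(\alpha)$ becomes a polynomial in $\alpha$ of degree at most $L+1$. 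Thus $\{\alpha\in[0,1]:f(\alpha)=0\}$ is either all of $[0,1]$ (if $f\equiv 0$) or a finite set of size at most $L+1$ (otherwise), and in the latter case it has measure zero, so the probability in question is $0$.

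The step that really needs attention is showing $f\not\equiv 0$ under the hypothesis $\chi_u^{L}\neq\chi_v^{L}$. The cleanest way is to substitute $y=1-\alpha$ and write
\[
g(y) \;=\; \sum_{t=0}^{L} y^{t}\bigl(q_{su}^{(t)}-q_{sv}^{(t)}\bigr);
\]
note that $f(\alpha)=\alpha\,g(1-\alpha)$, so $f\equiv 0$ on a set of positive measure would force $g\equiv 0$. But $g$ is already written in the monomial basis $\{1,y,\ldots,y^{L}\}$, so $g\equiv 0$ is equivalent to $q_{su}^{(t)}=q_{sv}^{(t)}$ for every $t\in\{0,1,\ldots,L\}$. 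The hypothesis $\chi_u^{L}\neq\chi_v^{L}$ says precisely that this equality fails for at least one $t\in\{1,\ldots,L\}$, so $g$ has a nonzero coefficient and hence is a nonzero polynomial of degree at most $L$.

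Having established that $g$ is a nonzero polynomial of degree at most $L$, I would finish by invoking the standard fact that such a polynomial has at most $L$ real roots, so the set of $y\in[0,1]$ (equivalently $\alpha\in[0,1]$) on which $g$ vanishes is finite, hence of Lebesgue measure zero. Since $\alpha$ is drawn uniformly from $[0,1]$, the probability of landing in this finite set is $0$, which gives the claim. The only subtle point—and the one I would be careful to state explicitly—is the change of basis from $\{(1-\alpha)^{t}\}$ to $\{\alpha^{i}\}$ (or the substitution $y=1-\alpha$); without this, one might worry that distinct coefficient vectors $q_{su}^{(\cdot)}-q_{sv}^{(\cdot)}$ produce the same polynomial. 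Since $\{(1-\alpha)^{t}\}_{t=0}^{L}$ is a basis for polynomials of degree $\le L$, this worry is unfounded, but it is the pivot of the argument and deserves an explicit sentence.
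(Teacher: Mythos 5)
Your proposal is correct and follows essentially the same route as the paper: both reduce the event to the vanishing of the degree-$\le L$ polynomial $\sum_{t=0}^{L}(1-\alpha)^{t}\bigl(q_{su}^{(t)}-q_{sv}^{(t)}\bigr)$, argue it is not identically zero because its coefficients in the basis $\{(1-\alpha)^{t}\}$ are exactly the entries of $\chi_u^{L}-\chi_v^{L}$, and conclude via the finiteness of the root set under a uniform draw of $\alpha$. Your explicit substitution $y=1-\alpha$ just spells out the change-of-basis point that the paper's proof leaves implicit.
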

\begin{proof}
	We define the polynomial $p_L(\alpha)$ as follows.
	$$p_L(\alpha)=\frac{\pi^L(s,u) - \pi^L(s,v)}{\alpha} = \sum_{t=0}^L \PP^t \cdot (q_{su}^{(t)} - q_{sv}^{(t)}).$$
	By assumption $\chi_u^{L}\neq \chi_v^{L}$, it follows that $p_L(\alpha)$ is a non-zero polynomial with degree at most $L$.
	Consequently, the probability that $\alpha$ is a root of $p_L$ is $0$, since $\alpha$ is a uniform random number in $[0,1]$ and the number of roots of $p_L(\alpha)$ is finite (at most $L$).
\end{proof}
One could see, in the noiseless setting where $G_1$ is isomorphic to $G_2$,  larger values of $L$ provide better theoretical discrimination; this is equivalent to computing PPR values with higher precision.

But in practice, $G_1$ and $G_2$ are often considered as two noisy versions of a ``ground truth" graph $G$. Thus the PPR value of $u$ in $G_1$, namely $\pi_{G_1}(s,u)$, is merely an estimate of $\pi_G(s,u)$. In this case, if the random walk in $G_1$ takes more steps, then it would accumulates more noise, which makes the estimates less reliable. Therefore, the choice of $L$ can be considered as a \emph{reliability versus discrimination} trade-off. In PPRGM, we implicitly control the value of $L$ by adjusting the parameter $r_{\max}$ in Forward-Push. While our original motivation to use smaller $L$ is to save computation time, doing this will also improve the precision sometimes as evidenced by our experiments.

\vspace{0.1cm}
\noindent
\textbf{Efficacy of the postponing strategy.}
Let us first introduce the $G(n,p;p_n,p_e)$ random graph model that has been used in~\cite{pedarsani2011privacy, kazemi2015growing} for analyzing percolation-based graph matching algorithms.
The $G(n,p;p_n,p_e)$ random graph model (illustrated in Figure~\ref{randomgraph}) generates two correlated graphs as follows:
(1) A graph $G(V,E)$ is generated from Erd\"os-Renyi random graph model, i.e., in a graph of $n$ vertices, each of ${n}\choose{2}$ possible edges occurs with probability $0<p<1$;
(2) $V_1$ and $V_2$ are two independent sample sets of $V$, in which each vertex $v\in V$ is sampled with probability $p_n$;
(3) Each edge $(u,v)\in E$ s.t. $u,v\in V_1$ is included in $E_1$ with probability $p_e$, and $E_2$ is generated in the same way.
We assume that the average degree $np$ is constant, 
thus $G(V,E)$ has a unique giant component containing a positive fraction of the vertices with high probability. The task of graph matching is to identify the correct matches between $V_1$ and $V_2$.

To prove the effectiveness of postponing strategy, we consider a special case here:
(1) the order-1 (neighboring) information is considered only, and each seed match adds $1$ mark to its neighboring pair, i.e., the matching score of pair $[u,v]$ is the number of matched pairs in $N(u,v)$;
(2) we fix $\gamma=1+\epsilon_1$ and $\beta=\epsilon_2$ where $\epsilon_1$ and $\epsilon_2$ are small positive constants. That is to say, a pair $[u,v]$ can be matched if $\mathtt{Score}(u,v)\ge 2$,
and $\mathtt{Score}(u,v)\ge\mathtt{Score}(u',v')+1$ for any $[u',v']\in \{(\{u\}\times V_2)\cup (V_1\times \{v\})-[u,v]\}$.

Denote $n_c$ the number of correctly matched pairs in set $S$ of seed matches, and $n_w$ the number of wrongly matched pairs in $S$. We give the following lemma.
\vspace*{-1mm}
\begin{lemma}\label{lemma1}
In the $G(n,p;p_n,p_e)$ random graph model, assume the above percolation method and matching criteria are adopted, in each iteration, the postponing strategy decreases the probability of a wrong pair being matched
from $(n_c+n_w)^2 p^4p_e^4$ to $(n_c+n_w)^2 p^4 p_e^4 - n_c^2 p^4 p_e^4 (2p_e^4-p_e^8+ 2 p_n^2 p_e^4 - 2p_n^4 p_e^4)$, and decreases the probability of a correct pair being matched
from $n_c^2 p^2 p_e^4$ to $n_c^2 p^2 p_e^4 (1-np^2)$.
\end{lemma}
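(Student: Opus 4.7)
The plan is to compute, for a generic vertex pair $[u,v]$ arising in the $G(n,p;p_n,p_e)$ model, the probability that it gets matched in a single iteration both with and without the postponing rule, treating separately the case where $[u,v]$ is correct ($u,v$ correspond to the same underlying vertex of $G$) and the case where $[u,v]$ is wrong. First I would establish the baseline, i.e.\ the probability $\Pr[\mathtt{Score}(u,v)\ge 2]$ under the simple threshold criterion alone. Writing the score as a sum of indicators over the $n_c+n_w$ pre-matched seed pairs, each indicator event has probability $p\cdot p_e^2$ when both $[u,v]$ and the seed pair are correct (one shared underlying edge surviving in both $E_1$ and $E_2$) and $p^2\cdot p_e^2$ otherwise (two independent underlying edges each surviving once); keeping only the dominant $\binom{m}{2}$ term of the two-out-of-$m$ event recovers the stated baselines $n_c^2 p^2 p_e^4$ and $(n_c+n_w)^2 p^4 p_e^4$.

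Next I would quantify the blocking effect of postponing on correct pairs. Conditioning on $\mathtt{Score}(u,v)=2$ realised by two correct seed pairs $[s_1,s_1'],[s_2,s_2']$, a blocking adversary $[u,v']$ needs $(v',s_1'),(v',s_2')\in E_2$; since $v'$ can be any of the $\Theta(n)$ vertices in $V_2$ and each of the two required edges contributes a factor $p$ from the underlying Erd\H{o}s--R\'enyi graph, the leading blocking probability is of order $np^2$ (absorbing the edge-retention constant), and the symmetric $[u',v]$ case merges into the same bound. Multiplying the baseline by $(1-np^2)$ then gives the claimed post-postponing probability for correct pairs.

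For the wrong-pair case I would account for the subtraction $n_c^2 p^4 p_e^4(2p_e^4-p_e^8+2p_n^2 p_e^4-2p_n^4 p_e^4)$ as the probability that the wrong pair $[u,v]$ realises $\mathtt{Score}\ge 2$ via two correct seed pairs but a correct adversary simultaneously achieves at least the same score and blocks it. The two natural blockers are the correct pair anchored at $u$ (so the ``true'' partner of $u$ on the $v$-side) and the correct pair anchored at $v$; each has a conditional blocking probability of order $p_e^4$ given the baseline configuration, producing $2p_e^4-p_e^8$ by inclusion--exclusion. Then I would introduce the sampling corrections $p_n^2$ versus $p_n^4$ reflecting whether the distinct endpoint actually lies in the appropriate graph's vertex sample $V_1$ or $V_2$, which contributes the residual $2p_n^2 p_e^4-2p_n^4 p_e^4$.

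The main obstacle I anticipate is handling the dependencies between the score of $[u,v]$ and the scores of its adversaries: they share an endpoint and hence share an entire neighborhood of random seed-edges, so the blocking probabilities are not independent of the conditioning used to obtain the baseline. A clean conditional argument is required, fixing the two seed pairs that realise $\mathtt{Score}(u,v)=2$ and then evaluating blocking over the \emph{remaining} edges of the two graphs, followed by a careful inclusion--exclusion that separates (i) $u$-side versus $v$-side adversaries, (ii) correct versus wrong adversaries, and (iii) the $p_n^2/p_n^4$ sub-cases depending on whether the adversary vertex is sampled into one or both graphs. Keeping only the leading order in the small parameters $p,p_n$ is what collapses all these cases into the compact formulas stated in the lemma.
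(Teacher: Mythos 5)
First, note that the paper itself does not contain a proof of this lemma (it is explicitly deferred to the full version), so your proposal can only be judged on its own merits. Your baselines and your correct-pair analysis are sound at the level of precision the lemma itself adopts: a correct seed marks a correct pair with probability $p\,p_e^2$ and marks a wrong pair with probability $p^2p_e^2$; the two-out-of-$m$ computation gives the stated leading terms (up to the $\binom{m}{2}$ versus $m^2$ constant that the lemma also ignores); and the dominant blockers of a correct pair are indeed the roughly $n\,p_n p^2 p_e^2$ adversaries $[u,v']$ and $[u',v]$ whose free endpoint is a common neighbour of the two marking seeds in the underlying graph, which is of the order $np^2$ claimed.

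The genuine gap is in the wrong-pair case. You assert that each of the two correct adversaries $[u,\bar u]$ and $[\bar v,v]$ (where $\bar u,\bar v$ denote the true partners of $u,v$) blocks with conditional probability ``of order $p_e^4$'', so that inclusion--exclusion yields $2p_e^4-p_e^8$; but you never exhibit the four independent edge-retention events that would produce a $p_e^4$. Carrying out exactly the conditional computation you propose --- fix the two correct seeds (underlying vertices $z_1,z_2$) realising $\mathtt{Score}(u,v)=2$, which forces the four underlying edges $(w_1,z_i)$ and $(w_2,z_i)$ into $E$ and fixes the retention of $(w_1,z_i)$ in $E_1$ and of $(w_2,z_i)$ in $E_2$ --- the adversary $[u,\bar u]$ then needs only the two already-present edges $(w_1,z_1),(w_1,z_2)$ to survive the independent $E_2$-sampling, plus $w_1\in V_2$; this gives $p_n p_e^2$ per blocker, not $p_e^4$, and a union of the form $2p_np_e^2-p_n^2p_e^4$ rather than the lemma's expression. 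Likewise the residual $2p_n^2p_e^4-2p_n^4p_e^4$ is attributed to ``sampling corrections'' without identifying any event whose probability is $p_n^2(1-p_n^2)p_e^4$. As written, this part of the plan reverse-engineers the target formula rather than deriving it; to close the gap you must either exhibit the two additional independent retentions per blocker (and the $p_n^2$ versus $p_n^4$ events) that the stated expression presupposes, or conclude that the constants in the lemma correspond to a different blocking event than the one defined by the matching criteria you (correctly) start from.
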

\vspace*{-1mm}
The proof of Lemma~\ref{lemma1} is included in the full version due to space constraints. Obviously, the postponing strategy greatly decreases the number of pairs that are wrongly matched.
Assume $p_n=1$ (say $V_{1,2}=V$) and $n_w=0$ (no wrong matches in the seed set),
then the probability of a wrong pair $[v,u]$ being matched in the next iteration is $p^4p_e^4 n_c^2(1-p_e^4)^2$, which is decreased nearly by $10$ times when the edge sampling probability $p_e=0.9$, compared to a probability of $p^4 p_e^4 n_c^2$ without the postponing strategy. In two isomorphic graphs, with $p_n=1$ and $p_e=1$, then no wrong pairs can be matched if all initial seeds are correct. This will also be verified in the experiments as our algorithm achieves a precision of $1$ in matching two isomorphic graphs.
In the case when $np\ll \sqrt{n}$ or $\frac{1}{p^2}\gg n$ (the random graph is relatively sparse, which is the most interesting case in practice), the probability of matching a correct pair with postponing, i.e., $n_c^2 p^2 p_e^4 (1-np^2)$, is very close to $n_c^2 p^2 p_e^4$, i.e., the probability of matching a correct pair without postponing. By replacing the matching probability (for correct pairs and wrong pairs) in Theorem 1 in~\cite{kazemi2015growing} with the new probabilities derived above, we can conclude that the postponing strategy increases the matching precision greatly while not affecting the overall recall by much.

\vspace{0.1cm}
\noindent
\textbf{Property of PPR-based signature vector.}
We will show that the $signature$ vector $R(u)$ of vertex $u$ illustrates $u$'s region information and local structural information.
A PPR value $\pi(s, u)$ of $u$ w.r.t. a seed $s$ can be decomposed into two parts:
(1) the probability that a decaying random walk terminates at vertex $u$ the first time it arrives at $u$, denoted by \textbf{$\pi_1(s, u)$};
(2) the probability that a decaying random walk terminates at vertex $u$ after arriving at vertex $u$ for more than once, denoted by \textbf{$\pi_2(s, u)$}.

Following the analysis in Section~\ref{22PPR}, we use $LS(u) = [\sum_{t=1}^{\infty}\PP^{t}\cdot P^{t}\cdot e_u]_u$ to denote the summed probabilities that a decaying random walk starting from $u$ reaches $u$ after step $1,2,3 \cdots$, then
$$\pi_2(s, u) = \pi_1(s, u) \times LS(u).$$
Then, the PPR value $\pi(s, u)$ can be presented by
$$\pi(s, u) = \pi_1(s, u) +\pi_2(s, u)  = \pi_1(s, u) \cdot(1+LS(u)).$$ 


Recall that $\frac{\pi_1(s, u)}{\SP}$ represents the summed probability that a decaying random walk from $s$ hits vertex $u$ for the first time at iteration $0, 1, 2\cdots$.
With a sufficient number of initial seeds in $S$, the region information of vertex $u$ can be represented by all $\frac{\pi_1(s_k, u)}{\SP}$ for $1\le k \le |S|$.
More specifically, given a number of correct seed matches on two graphs, $u\in V_1$ and $v\in V_2$ are likely on the aligned region over two graphs if and only if $\frac{\pi_1(s_k, u)}{\SP}$ and $\frac{\pi_1(s'_k, v)}{\SP}$ are close with respect to a certain number of seeds $[s_k, s'_k]\in S$.

On the other hand, 
note that $LS(u) = [\sum_{t=1}^{\infty}\PP^{t}\cdot P^{t}\cdot e_u]_u$ is independent of the source vertex of random walk; $LS(u)$ only depends on vertex $u$ and graph $G$.
We say $LS(u)$ demonstrates the graph structural information around vertex $u$; the closer $LS(u)$ and $LS(v)$ are, the local structures around $u$ and $v$ are more likely to be similar.
For instance, consider the probability that a random walk starts from $u$ that arrives at $u$ within $2$ steps, denoted by $LS^{(2)}(u)$.
Assume there is no self-loop in graph, the random walk from $u$ cannot return $u$ after one step, then
\begin{equation}
\begin{split}
LS^{(2)}(u) & = [\PP^{2}\cdot P^{2}\cdot e_u]_u  \\
            & = \PP^{2}\cdot \frac{1}{|N(u)|} \sum_{(u,v)\in E}\frac{1}{|N(v)|}
\end{split}
\end{equation}
Thus $LS^{(2)}(u)$ is the average reciprocal of degree of all neighbors of vertex $u$, containing the order-2 structural information of $u$.
Accordingly, $LS^{(t)}(u)$ with $t>2$ contains higher order structural information of $u$, the impact of which, however, is decayed by the factor $\PP^t$ over iterations.

\section{Experimental results}


\begin{table}[!tb]
\begin{center}
\vspace*{-2mm}\caption{\label{table:datasets1}
\bf{Datasets of Sampling Construction.}}
\vspace{-2mm}
\renewcommand{\arraystretch}{1}
\small
\resizebox{\columnwidth-0.2in}{!}{%
\begin{tabular}{l|cccc}
\noalign{\hrule height 0.7pt}
\small
\textbf{Datasets} & $|V|$ & $|E|$ & Triangles & $avg_D$ \\
\hline\hline

\textbf{Twitter}~\cite{leskovec2012learning} & $81.3$K & $1.7$M & $13.1$M & $3.7$\\

\textbf{Dblp}~\cite{yang2015defining} & $317.1$K & $1.1$M  & $2.2$M & $6.2$\\

\textbf{Amazon}~\cite{leskovec2009community} & $334.9$K & $925.9$K  & $667.1$K & $11.7$\\

\textbf{Youtube}~\cite{yang2015defining} & $1.1$M & $3.0$M  & $3.0$M & $4.2$\\

\textbf{WikiTalk}~\cite{leskovec2010predicting} & $2.4$M & $5.0$M  & $9.2$M &  $3.6$\\ 
\noalign{\hrule height 0.7pt}

\end{tabular}
}
\end{center}
\end{table}

\begin{table}[!tb]
\begin{center}
\vspace*{-2mm}\caption{\label{table:datasets2}
\bf{Snapshots of Superuser and AskUbuntu.}}
\vspace{-2mm}
\hspace*{-2mm}
\renewcommand{\arraystretch}{1.1}
\resizebox{\columnwidth}{!}{%
\begin{tabular}{l|cccc}
\noalign{\hrule height 0.7pt}
\textbf{Period} & $|V|$ & $|E|$ & $|V_1\cap V_2|$ & $|E_1\cap E_2|$ \\
\hline\hline
\multicolumn{4}{l}{\textbf{Superuser}~\cite{paranjape2017motifs}, \ $avg_D\in [3.83,3.85]$  }\\
\hline
\textbf{P0:}Jan,2014-Jan,2016 & $107$K & $241$K & $/$ & $/$\\

\textbf{P1:}Oct,2013-Oct,2015 & $105$K & $239$K  & $96$K & $209$K \\

\textbf{P2:}Jul,2013-Jul,2015 & $103$K & $238$K  & $84$K & $175$K \\

\textbf{P3:}Apr,2013-Apr,2015 & $100$K & $234$K  & $74$K &  $146$K \\
\hline
\multicolumn{4}{l}{\textbf{AskUbuntu}~\cite{paranjape2017motifs},  \ $avg_D\in [3.73,3.76]$ }\\
\hline
\textbf{P0:}Jan,2013-Jan,2016 & $120$K & $294$K & $/$ & $/$\\

\textbf{P1:}Sep,2012-Sep,2015 & $117$K & $290$K  & $108$K & $260$K\\

\textbf{P2:}May,2012-May,2015 & $114$K & $288$K  & $96$K & $224$K\\

\textbf{P3:}Jan,2012-Jan,2015 & $108$K & $278$K  & $85$K &  $191$K\\
\noalign{\hrule height 0.7pt}
\end{tabular}
}
\vspace*{-5mm}
\end{center}
\end{table}


%
%
%

In the experiments, we compare the PPRGM algorithms with the state-of-the-art methods on datasets with various characteristics. All algorithms are evaluated with respect to accuracy, efficiency and robustness, and it is observed that PPRGM outperforms the state-of-the-art on all aspects. We also study the performance of PPRGM with varying parameters. 

\subsection{Experiment Setting}\label{51setup}

\vspace{0.1cm}
\noindent
\textbf{Datasets.} We use multiple publicly available datasets. Two input graphs to be matched are constructed in two ways: (1) vertex/edge sampling from a real graph; (2) snapshots of a real temporal network with respect to different time windows. 

The way of constructing correlated graphs by vertex/edge sampling is similar to the method of generating $G(n,p;p_n,p_e)$ random graphs (see Figure~\ref{randomgraph}): given a real graph $G(V,E)$ (in stead of a Erd\"os-Renyi random graph), two subgraphs $G_1(V_1,E_1)$ and $G_2(V_2,E_2)$ are independently sampled from $G(V,E)$ with the vertex sampling probability $p_n$ and edge sampling probability $p_e$.
Two subgraphs are isomorphic when $p_n=1$ and $p_e=1$.
The datasets for the vertex/edge sampling approach are summarized in Table~\ref{table:datasets1}, and their detailed descriptions can be found in SNAP\footnote{http://snap.stanford.edu/data}.
Note that the average distance $avg_D$ for each network is presented.
Given a graph $G(V,E)$, we denote $P$ the set of all $(u,v)\in V\times V$ s.t. $u$ and $v$ are connected. Let $d(u,v)$ be the shortest path distance from $u$ to $v$. The average distance of $G$ is given by $avg_D= \frac{1}{|P|}\sum_{(u,v)\in P}{d(u,v)}$.





We also generate the input graphs from two temporal networks Superuser and AskUbuntu~\cite{paranjape2017motifs}.
Their detailed descriptions can also be found in SNAP. Each snapshot of a temporal network is generated by aggregating the edges within a given time period (time window).
We construct four snapshots $P_i$ for $0\le i \le 3$ for each dataset, and the summaries of them are presented in Table~\ref{table:datasets2}. 
For each dataset, the snapshot P0 shares different lengths of overlapping period with other three snapshots; the number of vertices and edges shared by two snapshots decrease as the length of their overlapping period decreases.



In the experiments, we use the default vertex sampling probability $p_n=1$ and the edge sampling probability $p_e=0.8$.
For time-varying datasets, we by default consider the matching between snapshots P0 and P1.

\noindent
\textbf{Competing algorithms.}
We evaluate the performance of PPRGM {High-Order Expansion} and {Neighbor Expansion}, referred to as HOE and NE respectively. The proposed methods are compared with four previous algorithms, which are listed below:
\vspace{-0.2cm}
\begin{itemize}
  \item EWS: The state-of-the-art PGM algorithm proposed in~\cite{kazemi2015growing} (its basic idea is introduced in Section~\ref{sec1}). It has been shown in \cite{kazemi2015growing} that EWS significantly outperforms other percolation based algorithms such as User-Matching~\cite{korula2014efficient} and PercolateMatched~\cite{yartseva2013performance}.
  \vspace{-2mm}
  \item FRUI: A variant of PGM proposed in~\cite{zhou2016cross}, which uses additional criteria to break the tie when multiple pairs have the same number of matched neighboring pairs (see Section~\ref{sec1}).
  \vspace{-2mm}
  \item AE: The anchor-expansion algorithm from~\cite{zhu2013high}. The similarity of each vertex pair is measured by comparing the two degree sequences extracted from their 2-neighborhood subgraphs; the algorithm iteratively matches the pair with the highest similarity score.
  \vspace{-2mm}
  \item GSANA: An algorithm proposed in~\cite{yacsar2018iterative} that finds candidates by an embedding strategy. Vertices of the two graphs are first embedded onto the same 2-D space based on their shortest path distances to some selected seeds; a pair $[u,v]$ is measured (i.e., considered as a candidate pair) if $u$ and $v$ are close in the 2-D space.
      It has been shown in~\cite{yacsar2018iterative} that GSANA outperforms IsoRank~\cite{singh2008global}, Klau~\cite{klau2009new}, and NetAlign~\cite{bayati2009algorithms}.
\end{itemize}
  \vspace{-2mm}
The codes of EWS and AE are provided the original authors; we thank them for kindly sharing their codes. All algorithms are implemented in C++.



\noindent
\textbf{Metrics.}
To assess the quality of matching algorithms, we use the same definition of \emph{precision} and \emph{recall} as in \cite{kazemi2015growing}: $precision=\frac{n_c}{n_c+n_w}$ where $n_c$ is the number of correct matches and $n_w$ is the number of wrong matches; $recall=\frac{n_c}{n_{ident}}$ where $n_{ident}$ is the number of vertices that are present in both graphs with degrees at least two.
We also evaluate the F1-score, 
the harmonic mean of precision and recall; formally $F1$-$score = 2\cdot \frac{precision\times recall}{precision+recall}$.

\vspace{0.1cm}
\noindent
\textbf{Setup.}
The number of seeds required is always a great concern in graph matching.
We set the number of seeds to $20$ by default for all datasets. 
Each correct seed $[u,u]$ is uniformly sampled from $V'=V_1\cap V_2$. We also consider the case when the initial seed set contains some wrong pairs and test the performance of the algorithms against such noisy seeds. In our experiments, a wrong seed $[u,v]$ is randomly sampled from $V_1\times V_2-V'\times V'$. 
We empirically set the default $\SP$ to $0.3$ following the discussion in Section~\ref{33analysis}.
The $r_{max}$ used in Forward-Push on initial seeds is set s.t. $|S|/r_{max}$ approximates $2\cdot max(|V_1|,|V_2|)$. 
The residue threshold $r'_{max}$ used in Forward-Push on other matched vertices (in HOE) is set to $10^{-3}$ by default; the effect of choosing different $r'_{max}$ on different datasets are also provided.

We report the average performance of $10$ tests in all experiments.
In each test, we first generate two graphs and a set of seed pairs according to the methods described above,
which are then used as the input for all algorithms.
We run experiments on a Linux machine with Intel Xeon E5-2698 v4 cloked at 2.2GHz.

\subsection{Algorithm Comparison}\label{52algorithmcomparison}

\begin{table*}[!tb]
  \centering
  \small
  \vspace*{-2mm}
  \caption{\bf{Precision and Recall on Datasets}}
  \vspace{-2mm}
  \label{table:pr_results1}
  \renewcommand{\arraystretch}{1.02}
  \resizebox{\columnwidth*2+0.2in}{!}{%
  \begin{tabular}{l|c|ccc|ccc|ccc|ccc|ccc|ccc}
    \noalign{\hrule height 1pt}
    \multirow{2}{*}{Datasets} & \multirow{2}{*}{Alg.} &
        \multicolumn{3}{c|}{$p_n$=$1.0$, $p_e$=$1.0$} & \multicolumn{3}{c|}{$p_n$=$1.0$, $p_e$=$0.9$} & \multicolumn{3}{c|}{$p_n$=$1.0$, $p_e$=$0.8$} &
        \multicolumn{3}{c|}{$p_n$=$0.9$, $p_e$=$0.9$} & \multicolumn{3}{c|}{$p_n$=$0.9$, $p_e$=$0.8$} & \multicolumn{3}{c|}{$p_n$=$0.8$, $p_e$=$0.9$}  \\
        \cline{3-20}
        & & \Recall & \Precison & \FSCORE & \Recall & \Precison & \FSCORE & \Recall & \Precison & \FSCORE & \Recall & \Precison & \FSCORE & \Recall & \Precison & \FSCORE & \Recall & \Precison & \FSCORE \\
    \hline
    \hline
    \multirow{6}{*}{\textbf{Twitter}}
        & HOE & \textbf{0.980} & \textbf{1.000} & \textbf{0.990} & \textbf{0.931} & 0.962 & \textbf{0.946} & \textbf{0.854} & 0.889 & \textbf{0.871} & \textbf{0.870} & 0.842 & \textbf{0.856} & \textbf{0.762} & 0.745 & \textbf{0.753} & \textbf{0.799} & 0.716 & \textbf{0.755} \\
        & NE & 0.965 & \textbf{1.000} & 0.982 & 0.906 & \textbf{0.965} & 0.935 & 0.823 & \textbf{0.896} & 0.858 & 0.840 & \textbf{0.853} & 0.846 & 0.733 & \textbf{0.766} & 0.749 & 0.766 & \textbf{0.736} & 0.751 \\
        & EWS & 0.966 & 0.968 & 0.967 & 0.900 & 0.912 & 0.906 & 0.763 & 0.793 & 0.778 & 0.815 & 0.772 & 0.793 & 0.526 & 0.539 & 0.532 & 0.630 & 0.568 & 0.598 \\
        & GSANA & 0.901 & 0.915 & 0.908 & 0.144 & 0.271 & 0.188 & 0.118 & 0.236 & 0.157 & 0.118 & 0.223 & 0.154 & 0.096 & 0.207 & 0.131 & 0.109 & 0.187 & 0.138  \\
        & AE & 0.911 & 0.912 & 0.912 & 0.338 & 0.352 & 0.345 & 0.145 & 0.157 & 0.150 & 0.123 & 0.122 & 0.123 & 0.057 & 0.058 & 0.058 & 0.052 & 0.048 & 0.050 \\
        & FRUI & 0.873 & 0.890 & 0.882 & 0.276 & 0.323 & 0.298 & 0.306 & 0.340 & 0.322 & 0.354 & 0.361 & 0.358 & 0.032 & 0.036 & 0.034 & 0.133 & 0.128 & 0.130   \\
    \hline
    \multirow{6}{*}{\textbf{Dblp}}
        & HOE & \textbf{0.970} & \textbf{1.000} & \textbf{0.985} & \textbf{0.652} & 0.712 & \textbf{0.681} & \textbf{0.535} & 0.596 & \textbf{0.564} & \textbf{0.542} & 0.585 & \textbf{0.563} & \textbf{0.422} & 0.463 & \textbf{0.441} & \textbf{0.413} & 0.432 & \textbf{0.422}  \\
        & NE & 0.767 & \textbf{1.000} & 0.868 & 0.482 & \textbf{0.737} & 0.583 & 0.371 & \textbf{0.635} & 0.468 & 0.382 & \textbf{0.628} & 0.475 & 0.278 & \textbf{0.501} & 0.357 & 0.274 & \textbf{0.491} & 0.352  \\
        & EWS & 0.618 & 0.679 & 0.648 & 0.491 & 0.589 & 0.536 & 0.358 & 0.478 & 0.410 & 0.390 & 0.476 & 0.429 & 0.264 & 0.355 & 0.303 & 0.279 & 0.337 & 0.305  \\
        & GSANA & 0.875 & 0.934 & 0.903 & 0.018 & 0.166 & 0.032 & 0.010 & 0.128 & 0.019 & 0.010 & 0.115 & 0.018 & 0.009 & 0.111 & 0.016 & 0.007 & 0.098 & 0.014  \\
        & AE & 0.749 & 0.750 & 0.749 & 0.303 & 0.358 & 0.329 & 0.095 & 0.128 & 0.109 & 0.102 & 0.128 & 0.114 & 0.032 & 0.043 & 0.037 & 0.031 & 0.037 & 0.034   \\
        & FRUI & 0.234 & 0.349 & 0.281 & 0.001 & 0.001 & 0.001 & 0.000 & 0.001 & 0.000 & 0.000 & 0.000 & 0.000 & 0.000 & 0.000 & 0.000 & 0.000 & 0.000 & 0.000   \\
    \hline
    \multirow{6}{*}{\textbf{Amazon}}
        & HOE & \textbf{0.981} & \textbf{1.000} & \textbf{0.990} & \textbf{0.716} & \textbf{0.755} & \textbf{0.735} & \textbf{0.508} & \textbf{0.556} & \textbf{0.531} & \textbf{0.538} & \textbf{0.569} & \textbf{0.553} & \textbf{0.348} & \textbf{0.392} & \textbf{0.369} & \textbf{0.334} & \textbf{0.373} & \textbf{0.353}  \\
        & NE & 0.785 & \textbf{1.000} & 0.880 & 0.454 & 0.754 & 0.567 & 0.130 & 0.328 & 0.186 & 0.106 & 0.289 & 0.155 & 0.013 & 0.044 & 0.019 & 0.008 & 0.029 & 0.013 \\
        & EWS & 0.671 & 0.774 & 0.719 & 0.401 & 0.571 & 0.471 & 0.071 & 0.160 & 0.098 & 0.075 & 0.157 & 0.101 & 0.003 & 0.009 & 0.005 & 0.002 & 0.006 & 0.003 \\
        & GSANA & 0.958 & 0.973 & 0.966 & 0.010 & 0.206 & 0.020 & 0.006 & 0.183 & 0.011 & 0.006 & 0.176 & 0.011 & 0.005 & 0.162 & 0.009 & 0.004 & 0.157 & 0.008  \\
        & AE & 0.898 & 0.899 & 0.898 & 0.353 & 0.409 & 0.379 & 0.075 & 0.106 & 0.088 & 0.075 & 0.099 & 0.085 & 0.016 & 0.024 & 0.019 & 0.014 & 0.019 & 0.016 \\
        & FRUI & 0.141 & 0.212 & 0.169 & 0.000 & 0.001 & 0.000 & 0.000 & 0.000 & 0.000 & 0.000 & 0.000 & 0.000 & 0.000 & 0.000 & 0.000 & 0.000 & 0.000 & 0.000  \\
    \hline
    \multirow{3}{*}{\textbf{Youtube}}
        & HOE & \textbf{0.908} & \textbf{1.000} & \textbf{0.952} & \textbf{0.811} & 0.937 & \textbf{0.869} & \textbf{0.714} & 0.866 & \textbf{0.783} & \textbf{0.730} & 0.833 & \textbf{0.778} & \textbf{0.643} & 0.751 & \textbf{0.693} & \textbf{0.652} & 0.721 & \textbf{0.685}   \\
        & NE & 0.868 & \textbf{1.000} & 0.929 & 0.750 & \textbf{0.963} & 0.843 & 0.629 & \textbf{0.938} & 0.753 & 0.660 & \textbf{0.899} & 0.761 & 0.563 & \textbf{0.855} & 0.679 & 0.574 & \textbf{0.832} & 0.679  \\
        & FRUI & 0.266 & 0.441 & 0.332 & 0.101 & 0.189 & 0.132 & 0.000 & 0.000 & 0.000 & 0.040 & 0.068 & 0.050 & 0.000 & 0.000 & 0.000 & 0.000 & 0.000 & 0.000  \\
    \hline
    \multirow{3}{*}{\textbf{WikiTalk}}
        & HOE & \textbf{0.773} & \textbf{1.000} & \textbf{0.872} & 0.730 & 0.968 & 0.832 & \textbf{0.682} & 0.917 & 0.782 & \textbf{0.703} & 0.899 & 0.789 & \textbf{0.650} & 0.870 & \textbf{0.744} & \textbf{0.658} & 0.845 & 0.740 \\
        & NE & 0.772 & \textbf{1.000} & 0.872 & \textbf{0.732} & \textbf{0.975} & \textbf{0.836} & 0.680 & \textbf{0.937} & \textbf{0.788} & 0.702 & \textbf{0.920} & \textbf{0.796} & 0.641 & \textbf{0.876} & 0.740 & 0.657 & \textbf{0.860} & \textbf{0.745}   \\
        & FRUI & 0.000 & 0.000 & 0.000 & 0.000 & 0.001 & 0.000 & 0.000 & 0.000 & 0.000 & 0.000 & 0.001 & 0.000 & 0.000 & 0.000 & 0.000 & 0.000 & 0.000 & 0.000  \\
    \noalign{\hrule height 1pt}
  \end{tabular}
  }
\end{table*}
\vspace{0.1cm}

\begin{table}[!tb]
\small
\begin{center}
\vspace*{-2mm}\caption{\label{table:pr_results2}\bf{Precision and Recall on Datasets}}
\vspace{-2mm}
\renewcommand{\arraystretch}{1}
\resizebox{\columnwidth-0.2in}{!}{%
\begin{tabular}{c|ccc|ccc|ccc}
    \noalign{\hrule height 1pt}
    \multirow{2}{*}{Alg.} & \multicolumn{3}{c|}{P0 vs. P1} & \multicolumn{3}{c|}{P0 vs. P2} & \multicolumn{3}{c}{P0 vs. P3} \\
    \cline{2-10}
    & \Recall & \Precison & \FSCORE  & \Recall & \Precison & \FSCORE  & \Recall & \Precison & \FSCORE  \\
    \hline
    \hline
    \multicolumn{9}{l}{\textbf{Superuser}}\\
    \hline
    HOE & \textbf{0.842} & 0.950 & \textbf{0.893} & \textbf{0.768} & 0.865 & 0.814 & \textbf{0.678} & 0.771 & 0.722  \\
    NE & 0.820 & \textbf{0.977} & 0.892 & 0.737 & \textbf{0.932} & \textbf{0.823} & 0.636 & \textbf{0.888} & \textbf{0.742}  \\
    EWS & 0.574 & 0.643 & 0.607 & 0.414 & 0.527 & 0.464 & 0.139 & 0.227 & 0.172  \\
    GSANA & 0.115 & 0.331 & 0.171 & 0.073 & 0.222 & 0.110 & 0.054 & 0.186 & 0.084  \\
    AE & 0.103 & 0.115 & 0.109 & 0.014 & 0.015 & 0.015 & 0.005 & 0.005 & 0.005  \\
    FRUI & 0.523 & 0.635 & 0.573 & 0.081 & 0.147 & 0.105 & 0.042 & 0.070 & 0.052  \\
    \hline
    \multicolumn{9}{l}{\textbf{AskUbuntu}}\\
    \hline
    HOE & \textbf{0.863} & 0.966 & 0.912 & \textbf{0.820} & 0.911 & 0.863 & \textbf{0.776} & 0.847 & 0.810  \\
    NE & 0.855 & \textbf{0.982} & \textbf{0.914} & 0.809 & \textbf{0.952} & \textbf{0.875} & 0.766 & \textbf{0.912} & \textbf{0.832}   \\
    EWS & 0.609 & 0.673 & 0.639 & 0.452 & 0.570 & 0.504 & 0.353 & 0.466 & 0.402   \\
    GSANA & 0.277 & 0.518 & 0.361 & 0.105 & 0.278 & 0.153 & 0.084 & 0.221 & 0.122  \\
    AE & 0.084 & 0.097 & 0.090 & 0.013 & 0.014 & 0.014 & 0.008 & 0.008 & 0.008   \\
    FRUI & 0.345 & 0.531 & 0.418 & 0.226 & 0.350 & 0.275 & 0.141 & 0.228 & 0.174  \\
    \noalign{\hrule height  1pt}
\end{tabular}
}
\vspace*{-3mm}
\end{center}
\end{table}

\vspace{0.1cm}
\noindent
\textbf{1. Recall and Precision.} Firstly, we evaluate the quality of the output matching.
Table~\ref{table:pr_results1} presents the experimental results (``R'' stands for recall and ``P'' stands for precision) on graphs that are constructed by the vertex/edge sampling approach. We use vertex sampling probability $p_n\in\{1.0, 0.9, 0.8\}$ and edge sampling probability $p_e\in\{0.9,0.8\}$. We additionally run matching algorithms on two isomorphic graphs where no vertex/edge sampling is applied (or equivalently $p_n=1$ and $p_e=1$).
Note that when $p_n=0.8$, only a $0.64$ fraction of all vertices exist in both graphs simultaneously (in expectation).
We specifically set $r'_{max}=10^{-4}$ for Amazon dataset which has very large average distance (use $r'_{max}=10^{-3}$ by default for other datasets), which will be discussed in Section~\ref{53performancestudy}.
Under our machine setting, the EWS, GSANA, and AE algorithms are too slow to finish the experiments on Youtube and WikiTalk dataset (within 6 hours).
In Table~\ref{table:pr_results2}, we show the results on the two temporal datasets Superuser and AskUbuntu. We make the following observations.

(1) HOE and NE have overall better performance than the start-of-the-art methods (among these algorithms EWS has the best overall recalls and precisions).
HOE always has significantly higher recalls and higher precisions than the all previous algorithms;
NE achieves very high precisions (usually even higher than HOE), while it also has competitive recalls. Regarding F1-scores, both HOE and NE outperform all other competing algorithms by a noticeable margin, while HOE has the best performance most of the time and could be significantly better than all other algorithms sometimes (e.g., on the Amazon dataset).
In the cases where two isomorphic graphs are considered ($p_n$=$1$ and $p_e$=$1$), HOE and NE always have $100\%$ precision, while other algorithms still output lots of wrong match pairs. Under this setting, the recalls of HOE are also close to $1$ for most datasets.

(2) HOE has higher recalls than NE on most datasets, while NE often has higher precisions than HOE.
Because PPRGM greedily matches the candidate pairs with high credibility first; it becomes harder to choose correct matches among candidate pairs with relatively lower credibilities.
HOE is more aggressive in matching pairs than NE, which results in higher recall at the cost of slightly lower precision. However, if we stop the matching process of HOE earlier, then at the stage when roughly the same number of correct pairs are matched, HOE often has higher precisions than NE.
\begin{figure*}[!tb]
    \centering
    \begin{minipage}[c]{0.74\textwidth}%
    \captionsetup[subfigure]{oneside,margin={10mm,0cm}}
    \hspace{-6mm}
    \begin{subfigure}{1\textwidth}
        \includegraphics[width=\linewidth]{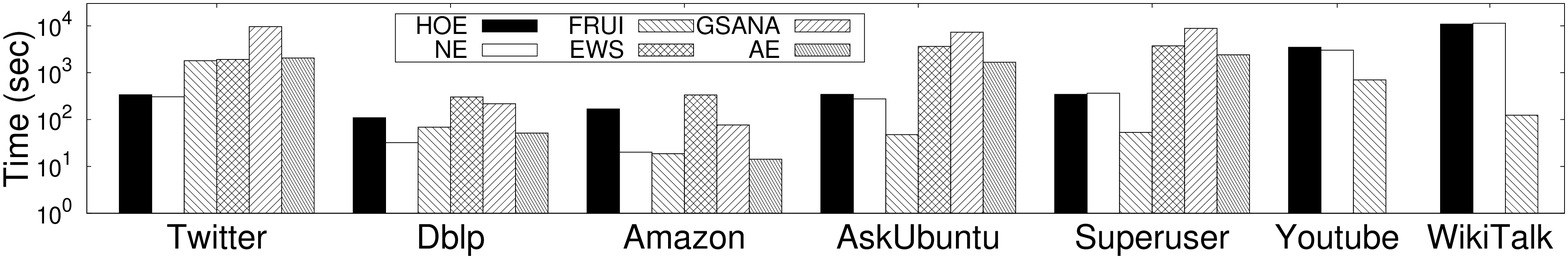}
        \vspace{-4mm}
        \label{fig:time}
        \caption{\label{expfig:time}{Matching time on different datasets.}}

    \end{subfigure}
    \captionsetup[subfigure]{oneside,margin={6mm,0cm}}
    \hspace{3mm}
    \hspace{-40mm}
    \begin{subfigure}{1\textwidth}
        \includegraphics[width=\linewidth]{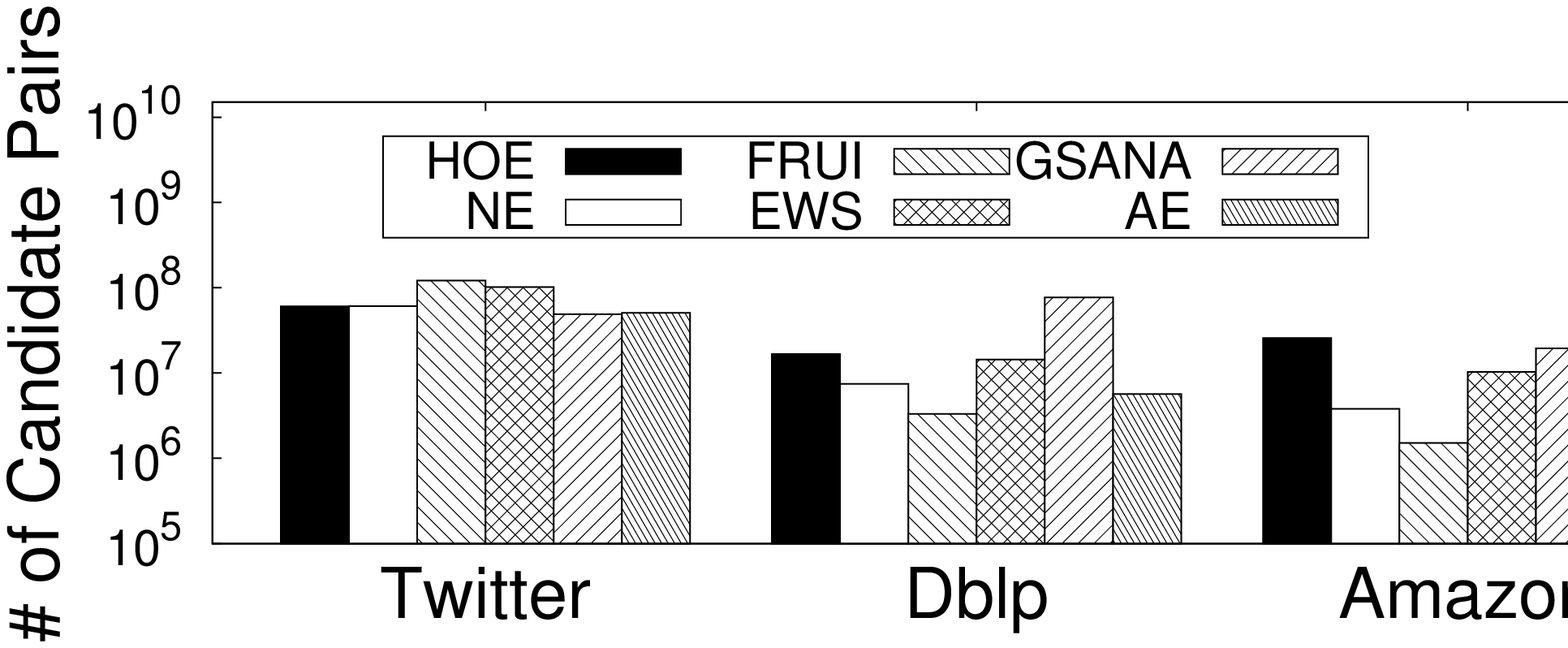}
        \vspace{-4mm}
        \label{fig:candidate}
        \caption{\label{expfig:candidate}{Number of candidate pairs on different datasets.}}

    \end{subfigure}
    \end{minipage}
    \qquad
    \begin{minipage}[c]{0.22\textwidth}%
    \hspace{-2mm}
    \begin{subfigure}{1\textwidth}
        \includegraphics[width=\linewidth]{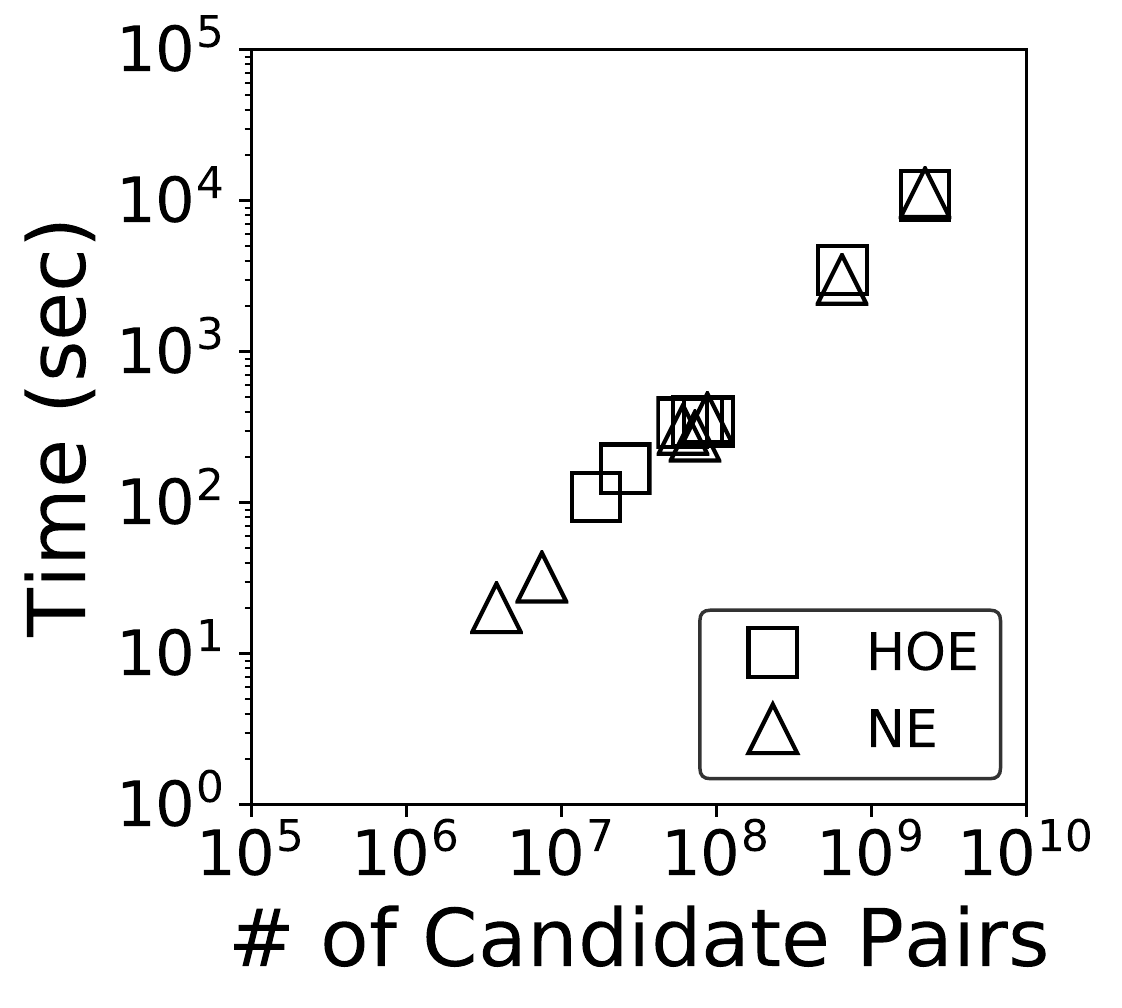}
        \vspace{-4mm}
        \label{fig:timecand}
        \caption{\label{expfig:timecand}{\small Linear dependence of the matching time on the number of candidate pairs.}}
    \end{subfigure}
    \end{minipage}
    \caption{\label{expfig:efficiency}\bf{Matching time and Number of candidate pairs.}}

    \vspace*{-1mm}
\end{figure*}

\begin{table}[!tb]
\large
\begin{center}
\vspace*{-2mm}\caption{\label{table:avgtime}\bf{Matching Time (ms) per Correct Match.}}
\vspace{-2mm}
\renewcommand{\arraystretch}{1}
\resizebox{\columnwidth}{!}{%
    \begin{tabular}{c|ccccccc}
    \noalign{\hrule height 1pt}
        Alg.  & Twitter & Dblp & Amazon & AU & SU & Youtube & WikiTalk  \\
        \hline
        \hline
        HOE  & 4.88 & 0.90 & 14.83 & 7.10 & 15.41 & 12.10 &  \textbf{40.32} \\
        NE  & \textbf{4.87} & \textbf{0.38} &  \textbf{0.51} & \textbf{4.67} & 12.55 & \textbf{11.89} &{42.02}\\
        EWS  & 34.72 & 3.72 & 17.43 & 233.1 & 208.1 & / & / \\
        GSANA& 1123 & 93.84 & 48.17 & 974.0 & 1875 & / & / \\
        AE   & 196.3 & 2.38 & 0.70 & 2116 & 3237 & / & / \\
        FRUI & 81.23 & 775.1 & 280.4 & 4.78 & \textbf{10.35} & 13926 & 6530 \\
    \noalign{\hrule height  1pt}
    \end{tabular}
}
\vspace*{-6mm}
\end{center}
\end{table}

\vspace{0.1cm}
\noindent
\textbf{2. Running time.}
The efficiency is of great concern in matching large graphs. 
In all competing algorithms, a basic operation is to compute some type of matching scores (defined differently in each method) of selected vertex pairs. Thus, in each algorithm, we call a vertex pair a \emph{candidate pair} if its matching score is ever computed.
The total matching time and the number of candidate pairs of all algorithms are presented in Figure~\ref{expfig:efficiency}.
The EWS, GSANA, and AE algorithms are too slow to finish the experiments on Youtube and WikiTalk datasets (within 6 hours). We make the following observations.
(1) In terms of total time, our two algorithms are very competitive; the FRUI method is the fastest in most cases, however it has low recalls, hence the total matching time is not a good indicator of efficiency.
Considering the recalls of different algorithms differ greatly, we use \emph{time per correct match}, i.e., total running time divided by the number of correct matches, as the metric for comparing time efficiency\footnote{This actually favors low-recall algorithm as it is getting harder and harder to identify more correct matches.}.
We observe NE is the most efficient among all algorithms (see Table~\ref{table:avgtime}), while HOE is also very competitive; our algorithms are typically 10 to 100 times faster in terms of time per correct match.
(2) For each algorithm (except for GSANA, whose running time is dominated by the embedding process), the more candidate pairs are considered, the longer matching time it takes.
(3) NE and HOE often consider fewer candidate pairs than other algorithms, because they directly utilize the higher-order structural information, which helps to identify true matches without generating a large set of candidates.
(4) The matching time of HOE and NE is almost linear in the number of candidate pairs, which verifies our theoretical analysis (see Figure~\ref{expfig:timecand}, each dot represents the result of HOE or NE on one dataset).

\begin{figure}[!tb]
    \centering
    \centering
    \captionsetup[subfigure]{oneside,margin={7mm,0cm}}
    \includegraphics[width=\linewidth]{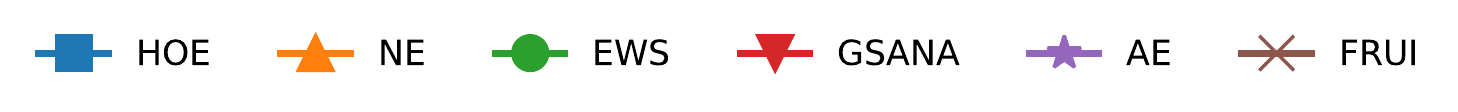}\label{fig:legend4}
    \vspace*{-5mm}

    \begin{subfigure}{0.22\textwidth}
        \includegraphics[width=\linewidth+0.1in]{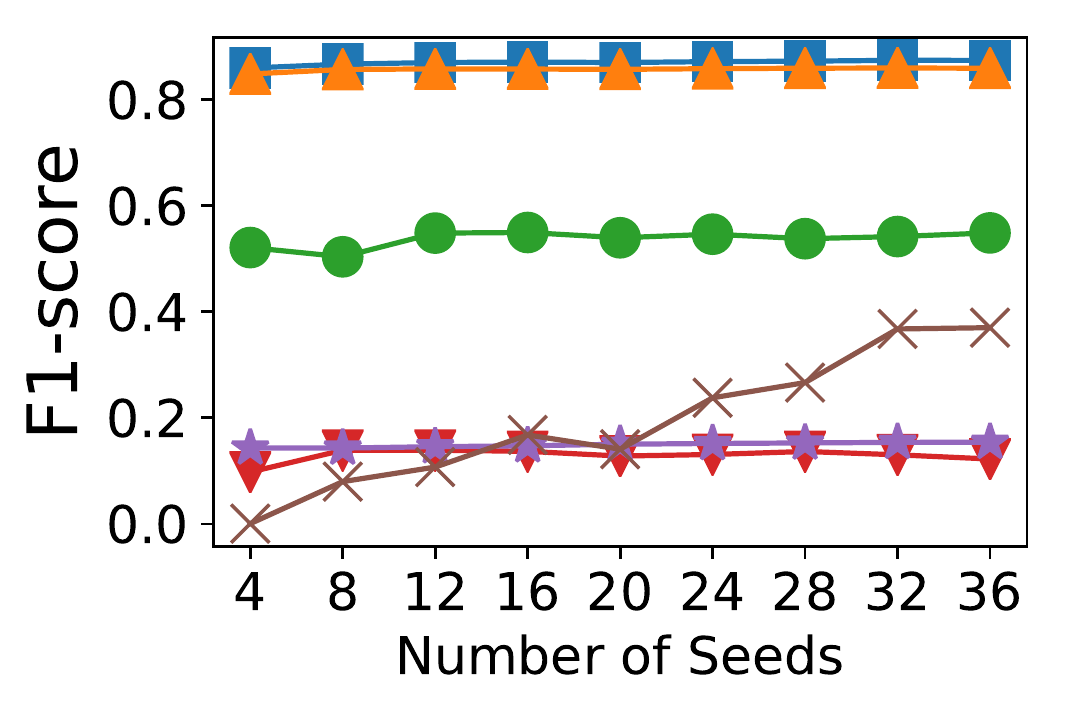}
        \vspace{-5mm}
        \caption{Twitter.}
        \label{expfig:twsnf1}
    \end{subfigure}
    \hspace{4mm}
    \begin{subfigure}{0.22\textwidth}
        \includegraphics[width=\linewidth+0.1in]{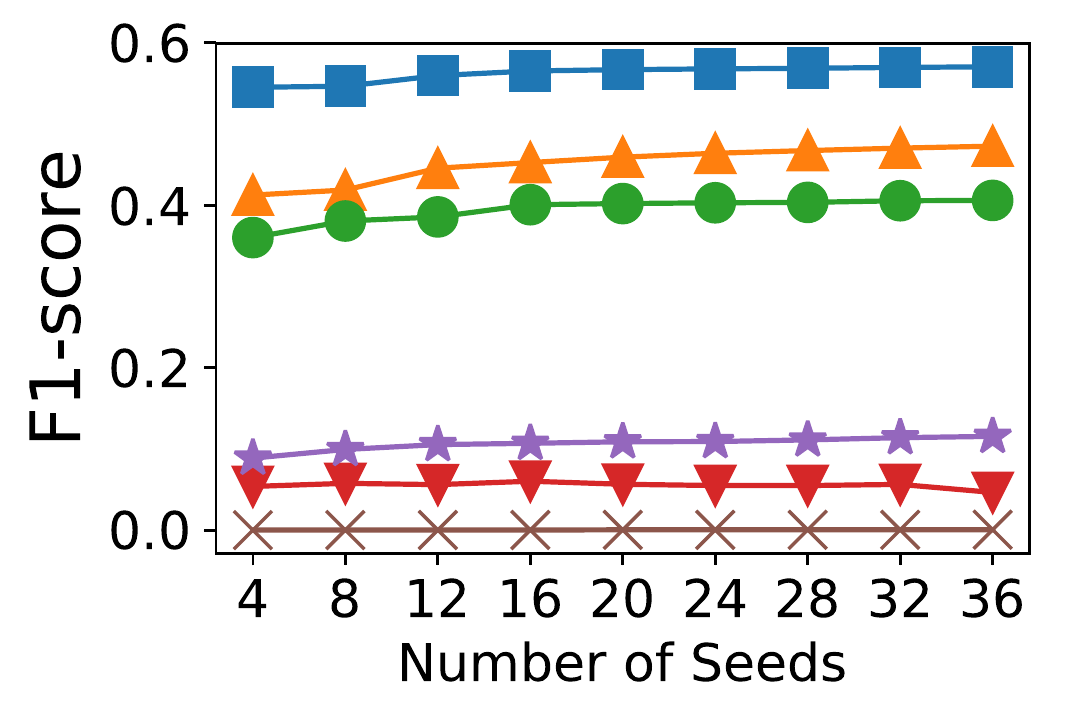}
        \vspace{-5mm}
        \caption{Dblp.}
        \label{expfig:dbsnf1}
    \end{subfigure}
    \vspace*{-3mm}
    \caption{\label{expfig:snf1}\bf{F1-score vs. Number of seeds.}}
    \vspace{-2mm}
\end{figure}

\begin{figure}[!tb]
    \centering
    \includegraphics[width=\linewidth]{expfig/legend4.pdf}\label{fig:legend4}
    \captionsetup[subfigure]{oneside,margin={7mm,0cm}}

    \vspace*{-1mm}
    \begin{subfigure}{0.22\textwidth}
        \includegraphics[width=\linewidth+0.1in]{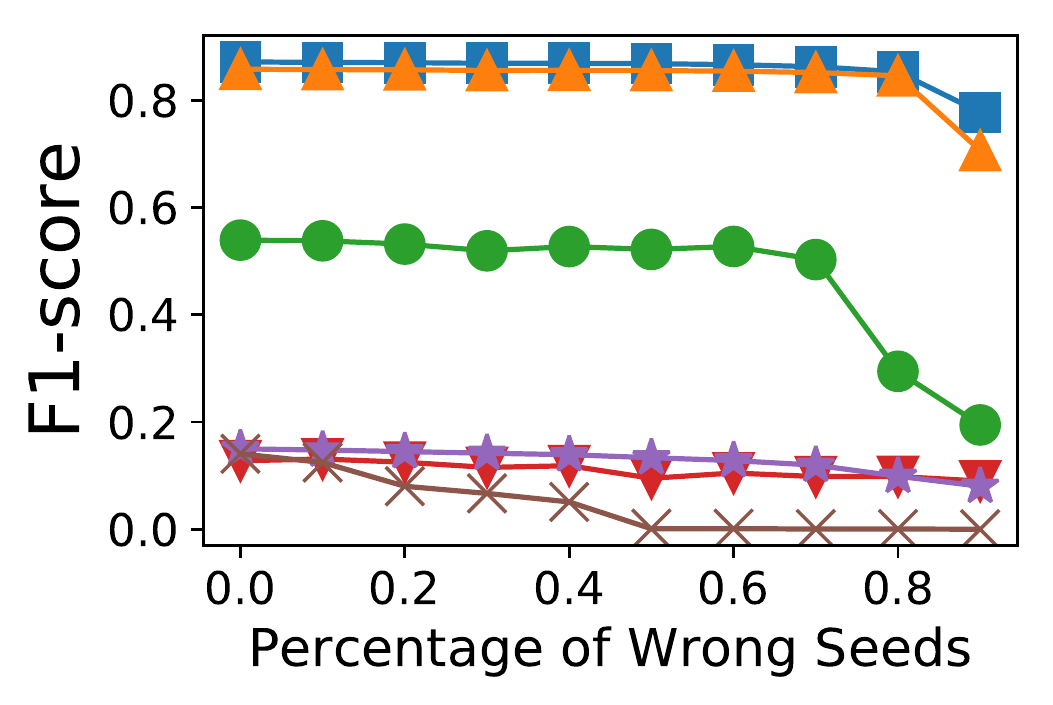}
        \vspace{-5mm}
        \caption{Twitter.}
        \label{fig:twwrf1}
    \end{subfigure}
    \hspace{4mm}
    \begin{subfigure}{0.22\textwidth}
        \includegraphics[width=\linewidth+0.1in]{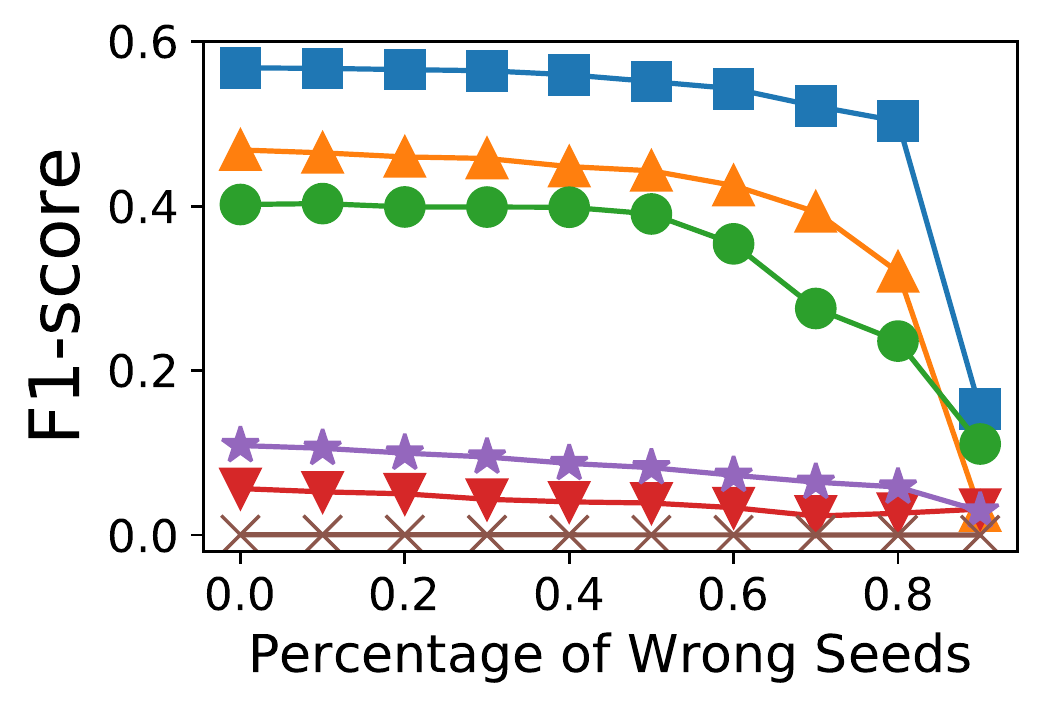}
        \vspace{-5mm}
        \caption{Dblp.}
        \label{fig:dbwrf1}
    \end{subfigure}

    \vspace*{-3mm}
    \caption{\label{expfig:wnf1}\bf{F1-score vs. Percentage of wrong seeds.}}
    \vspace{-3mm}
\end{figure}

\vspace{0.1cm}
\noindent
\textbf{3. Varying number of seeds.}
We then evaluate the F1-scores of algorithms with varying number of seeds (see Figure~\ref{expfig:snf1}). It is observed that both HOE and NE achieve high F1-scores even with only $4$ seeds on datasets Twitter and Dblp, and the performances are slightly improved as more seeds are provided. 
The FRUI method, a representative of traditional PGM algorithms which uses the neighboring information only, is highly sensitive to the number of seeds, and often terminates early with insufficient number of seeds. Experimental results on other datasets will appear in the full version due to space constraints.



\vspace{0.1cm}
\noindent
\textbf{4. Robustness against wrong seeds.}
In Figure~\ref{expfig:wnf1}, we measure the robustness of matching algorithms against wrong seeds. 
The matching algorithms are provided with $20$ seeds with the percentage of wrong seeds varying from $0$ to $0.9$.

%
%

It is observed that the wrong seeds have limited effect on the F1-score of HOE and NE, even if the percentage of wrong seeds is very high ($4$ correct seeds against $16$ wrong seeds, shown in Figure~\ref{expfig:wnf1}(a)); the performance on Dblp dataset is slightly more sensitive to the number of wrong seeds.
Such a phenomenon is also observed and explained in~\cite{kazemi2015growing} for EWS. From our experimental results, PPRGM is usually more robust than EWS.
On both datasets, the F1-score of EWS decreases significantly when the percentage of wrong seeds exceeds $0.6$.
The superior robustness of PPRGM is mainly due to the global impact of initial seeds: assume each wrong seed pairs is independently sampled, then in expectation any pair of two vertices $[u,v]$ receive similar PPR values from this wrong seed, and thus the set of random wrong seeds merely shifts the scores of all pairs by a same constant (in expectation), which doesn't affect the relative order of all matching scores.

\subsection{Study on the Parameters of PPRGM}\label{53performancestudy}

\begin{figure}[!tb]
    \centering
    \includegraphics[width=\linewidth]{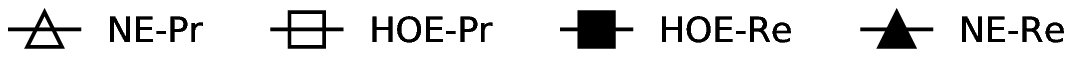}\label{fig:legend5}
    \captionsetup[subfigure]{oneside,margin={5mm,0cm}}

    \vspace{-1mm}
    \hspace*{-6mm}
    \begin{subfigure}{0.22\textwidth}
        \includegraphics[width=\linewidth+0.25in]{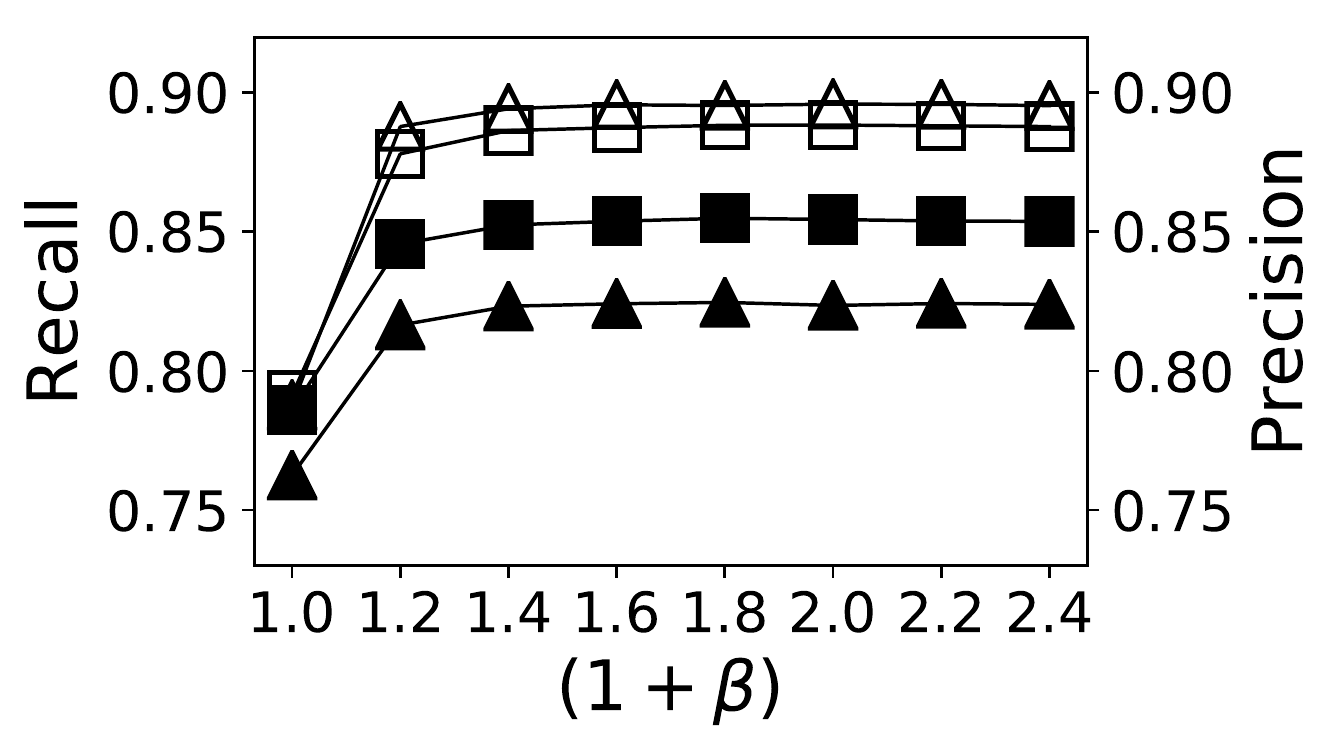}
        \vspace{-5mm}
        \caption{Twitter.}
        \label{fig:twbeta}
    \end{subfigure}
    \hspace{0.2in}
    \begin{subfigure}{0.22\textwidth}
        \includegraphics[width=\linewidth+0.15in]{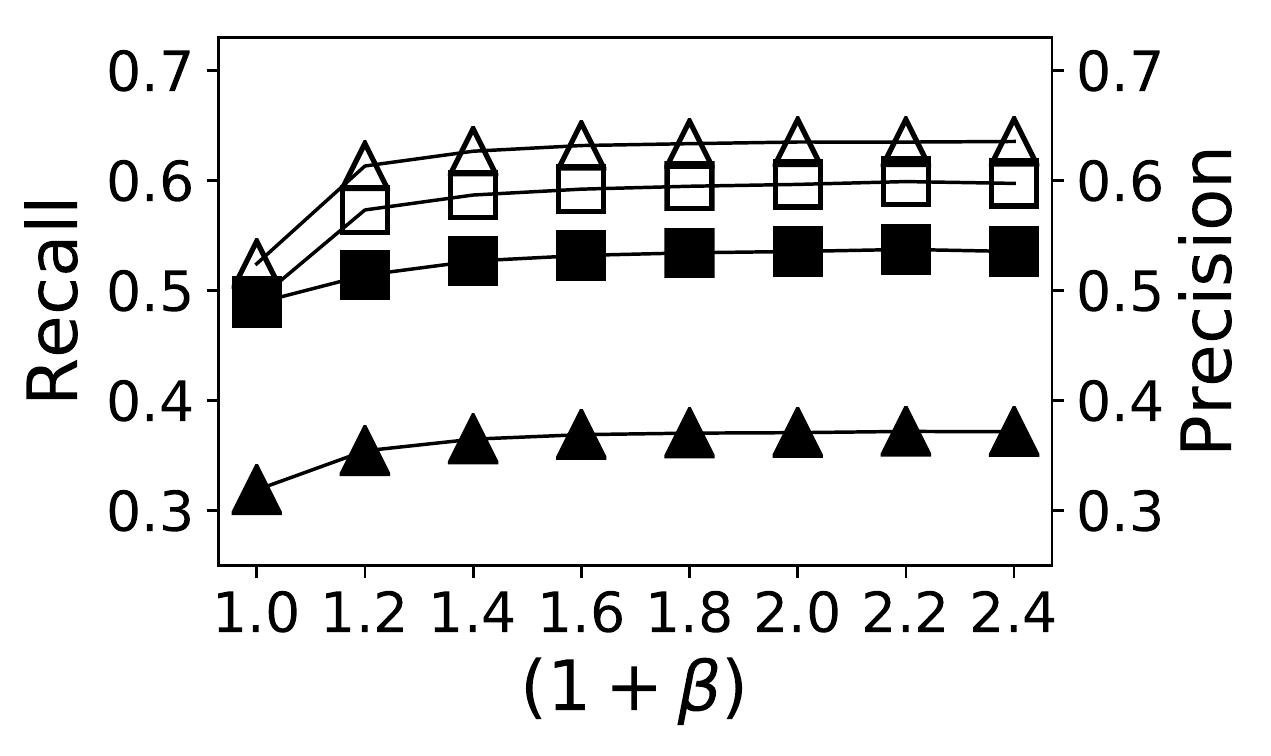}
        \vspace{-5mm}
        \caption{Dblp.}
        \label{fig:dbbeta}
    \end{subfigure}

    \vspace*{-3mm}
    \caption{\label{expfig:beta}\bf{Recall and Precision vs. $\beta$.}}
    \vspace{-2mm}
\end{figure}

\vspace{0.1cm}
\noindent
\textbf{1. Varying $\beta$ in postponing strategy.}
We evaluate the effect of postponing strategy by using different initial $\beta$ (see Figure~\ref{expfig:beta}).
Recall that PPRGM initializes a high $\beta$ ($\beta=1$), and decreases $\beta$ (relax the matching criteria) when there are no candidate pairs satisfying the matching criteria, until $\beta<\epsilon$ where $\epsilon$ is a very small constant.
It is observed the adopting postponing strategy (by setting $(1+\beta)\ge 1.2$) brings significant boost in recall and precision to both HOE and NE; the performances of HOE and NE are slightly improved with larger $\beta$ set at the beginning of algorithm. Overall, the performance is not sensitive to the setting of $\beta$; it is safe to set $1+\beta=2$ as in our default setting.

\begin{figure}[!tb]
    \centering
    \includegraphics[width=\linewidth]{expfig/legend5.eps}\label{fig:legend5}
    \captionsetup[subfigure]{oneside,margin={5mm,0cm}}

    \vspace{-1mm}
    \hspace*{-6mm}
    \begin{subfigure}{0.22\textwidth}
        \includegraphics[width=\linewidth+0.25in]{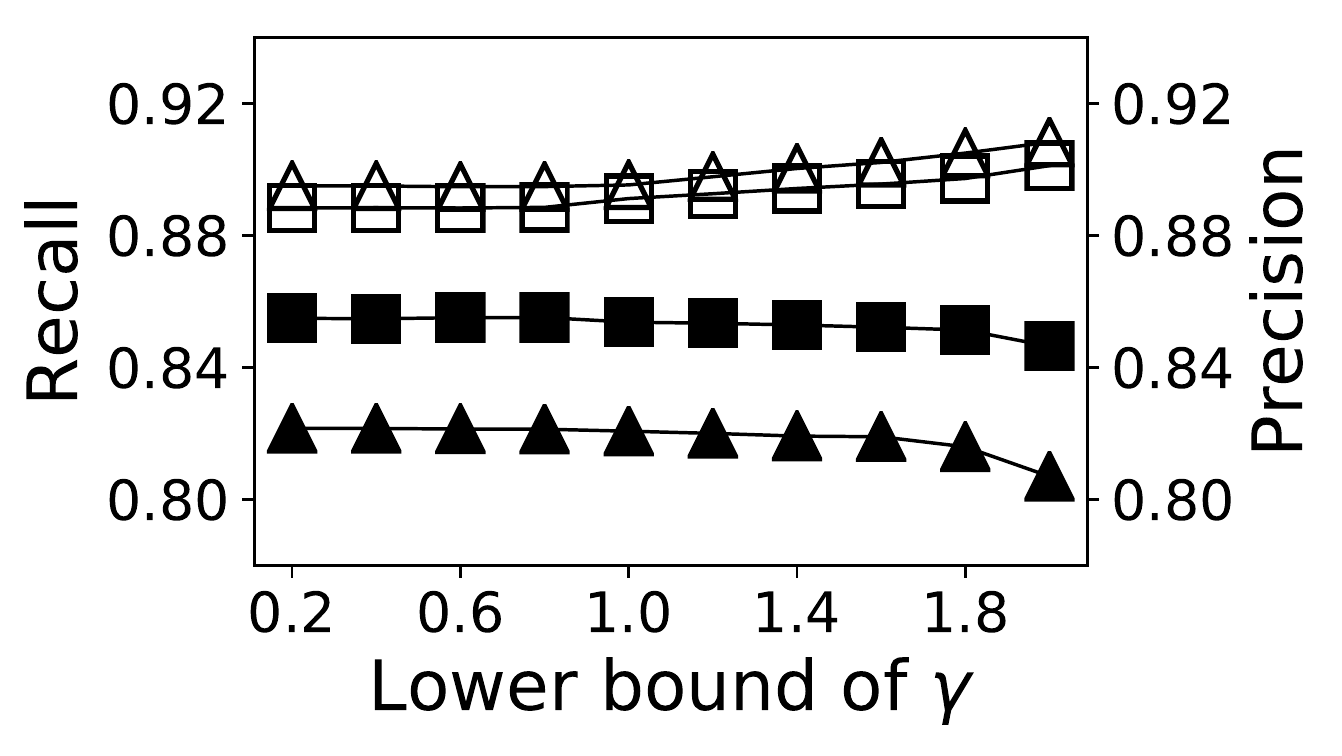}
        \vspace{-5mm}
        \caption{Twitter.}
        \label{fig:twgamma}
    \end{subfigure}
    \hspace{0.2in}
    \begin{subfigure}{0.22\textwidth}
        \includegraphics[width=\linewidth+0.15in]{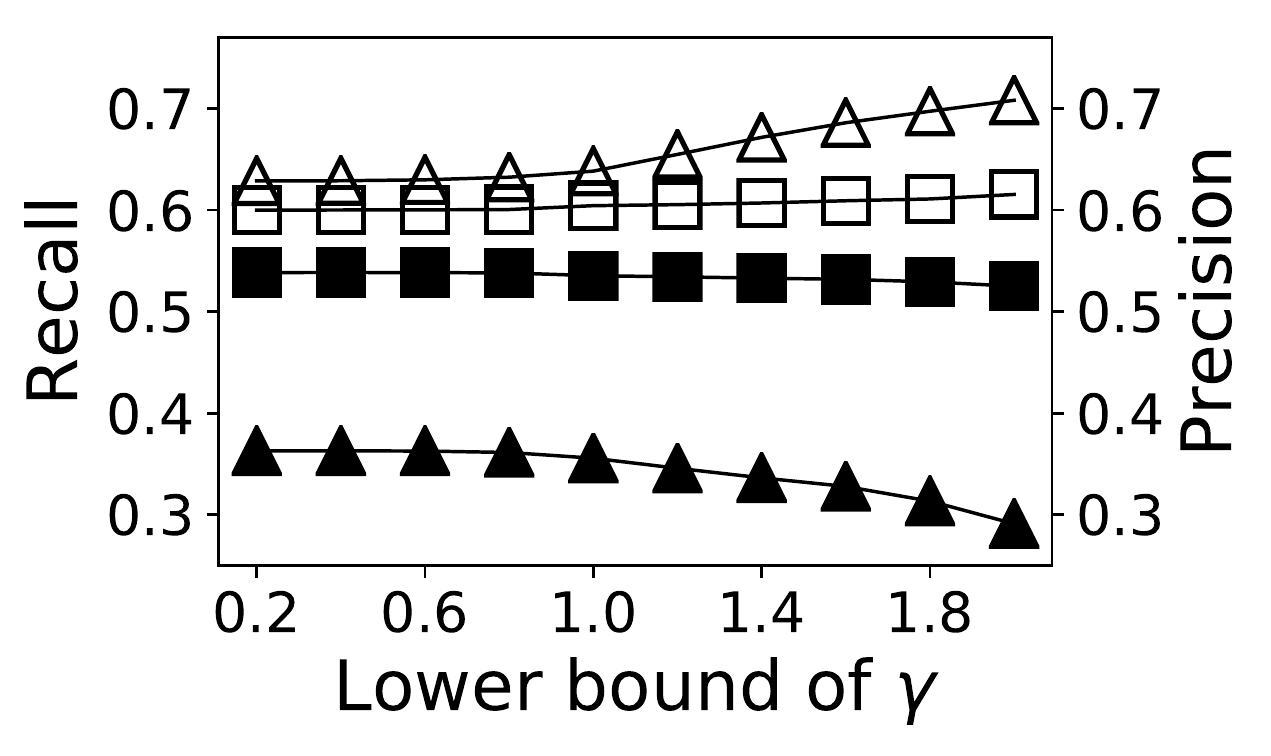}
        \vspace{-5mm}
        \caption{Dblp.}
        \label{fig:dbgamma}
    \end{subfigure}

    \vspace*{-3mm}
    \caption{\label{expfig:gamma}\bf{Recall and Precision vs. Lower bound of $\gamma$.}}
    \vspace{-2mm}
\end{figure}

\vspace{0.1cm}
\noindent
\textbf{2. Varying lower bound of $\gamma$ in postponing strategy.}
Similar as $\beta$, the choice of the initial value of $\gamma$ doesn't affect the performance by much, thus we only present the effect of varying lower bounds of $\gamma$ (see Figure~\ref{expfig:gamma}).
Recall that PPRGM initializes a high $\gamma$ ($\gamma=|S|/2$), and decreases $\gamma$ when there is no candidate pair satisfying the current matching criteria, until $\gamma$ reaches a lower bound (which is $1$ in our default setting). Therefore, a larger lower bound should reduce the recall since more strict criteria is adopted in the end, and thus it should also increase the precision. This is exactly what we have observed in the experiments.   In general, one could set the lower bound to $1$ unless a very high precision is needed.

\begin{figure}[!tb]
    \centering
    \includegraphics[width=\linewidth]{expfig/legend5.eps}\label{fig:legend5}
    \captionsetup[subfigure]{oneside,margin={5mm,0cm}}

    \vspace{-1mm}
    \hspace*{-6mm}
    \begin{subfigure}{0.22\textwidth}
        \includegraphics[width=\linewidth+0.25in]{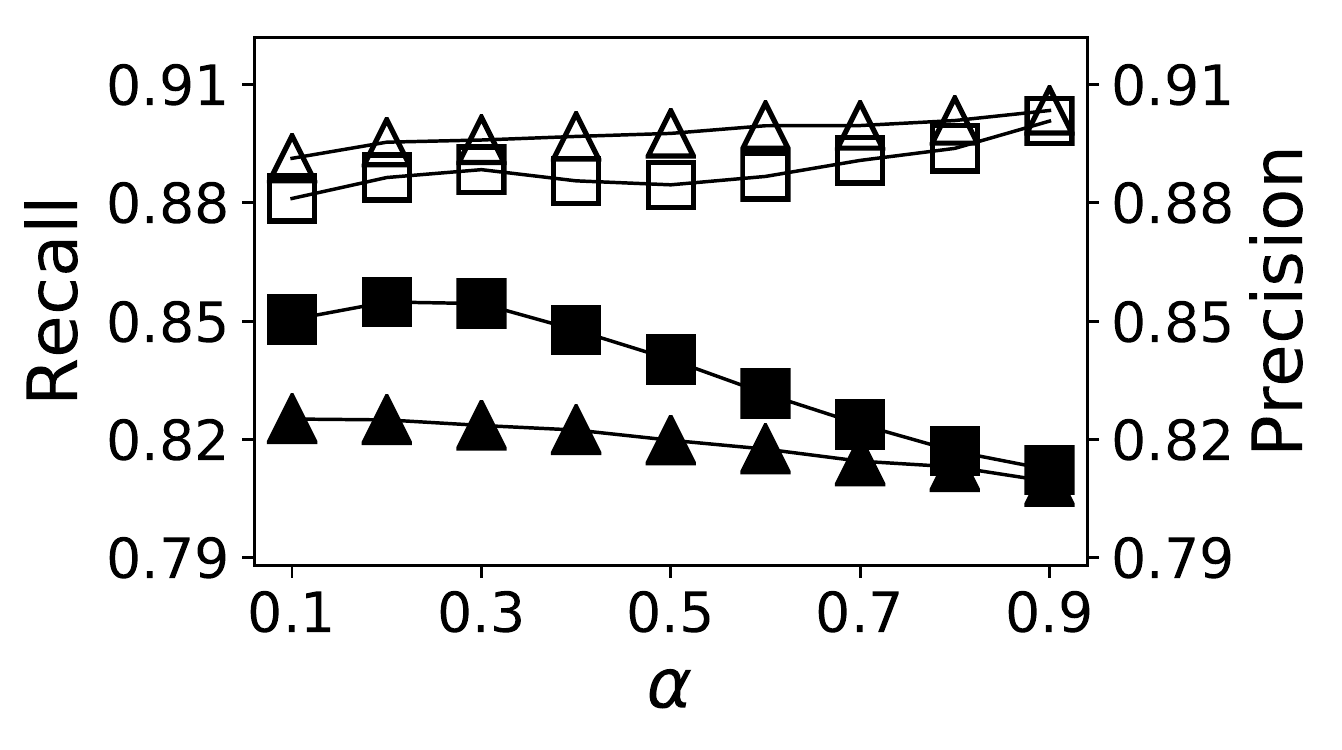}
        \vspace{-5mm}
        \caption{Twitter.}
        \label{fig:twalpha}
    \end{subfigure}
    \hspace{0.2in}
    \begin{subfigure}{0.22\textwidth}
        \includegraphics[width=\linewidth+0.15in]{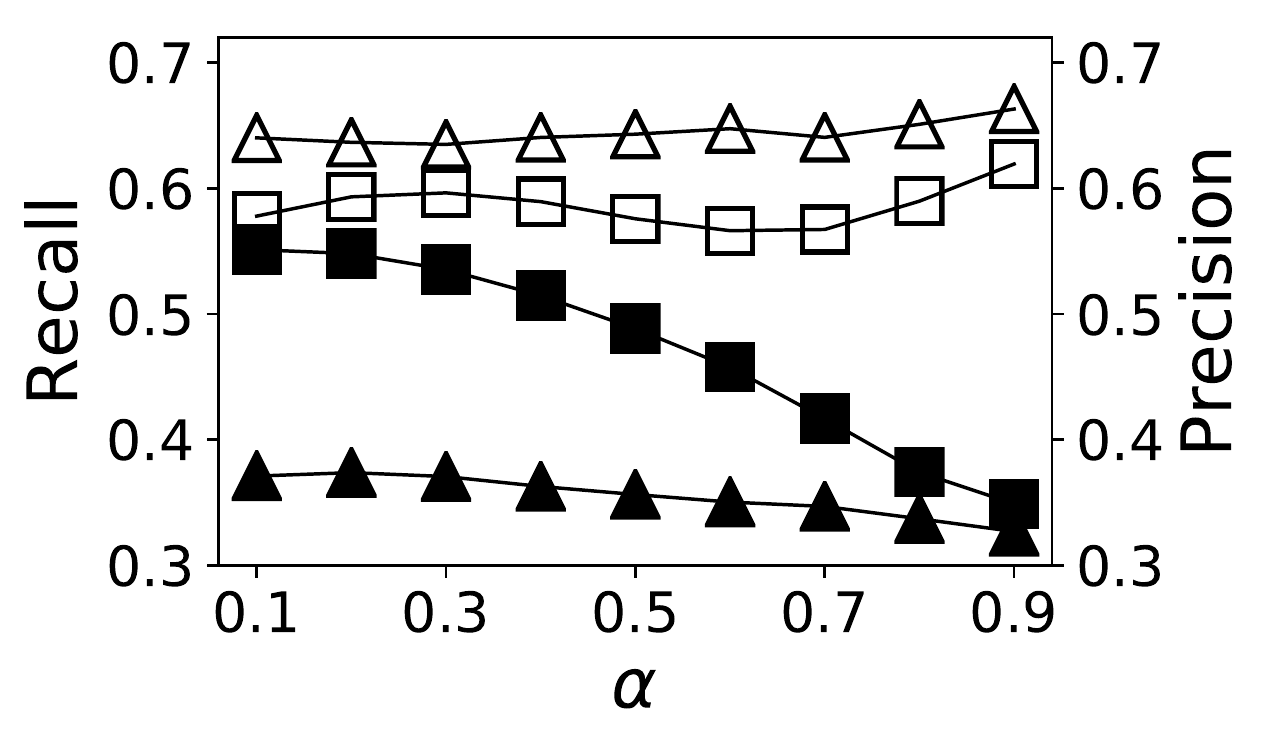}
        \vspace{-5mm}
        \caption{Dblp.}
        \label{fig:dbalpha}
    \end{subfigure}

    \vspace*{-3mm}
    \caption{\label{expfig:alpha}\bf{Recall and Precision vs. $\SP$.}}
    \vspace{-2mm}
\end{figure}

\vspace{0.1cm}
\noindent
\textbf{3. Varying $\SP$ in Forward-Push.}
The choice of the stopping probability in the decaying random walk in PageRank varies in different applications, which is largely based on empirical studies. A discussion has been presented in~\cite{gleich2010inner}.
We evaluate the performance of HOE and NE with varying stopping probabilities $\SP$ in the decaying random walk (see Figure~\ref{expfig:alpha}).
We observe that the recall of both HOE and NE increases as $\SP$ becomes smaller (the boost is more significant on HOE),
because a Forward-Push with smaller $\SP$ may infect a larger range of vertices and thus will include more candidates.
For precision, varying $\SP$ doesn't affect the results a lot. The optimal performance in terms of F1-score appears when $\alpha\approx 0.3$.
Such phenomena are also observed on other datasets, which will be provided in the full version, so our experimental study empirically concludes that $\alpha=0.3$ is the right choice for seed graph matching.


\begin{figure}[!tb]
    \centering
    \captionsetup[subfigure]{oneside,margin={5mm,0cm}}
    \hspace*{-5mm}
    \begin{subfigure}{0.22\textwidth}
        \includegraphics[width=\linewidth+0.2in]{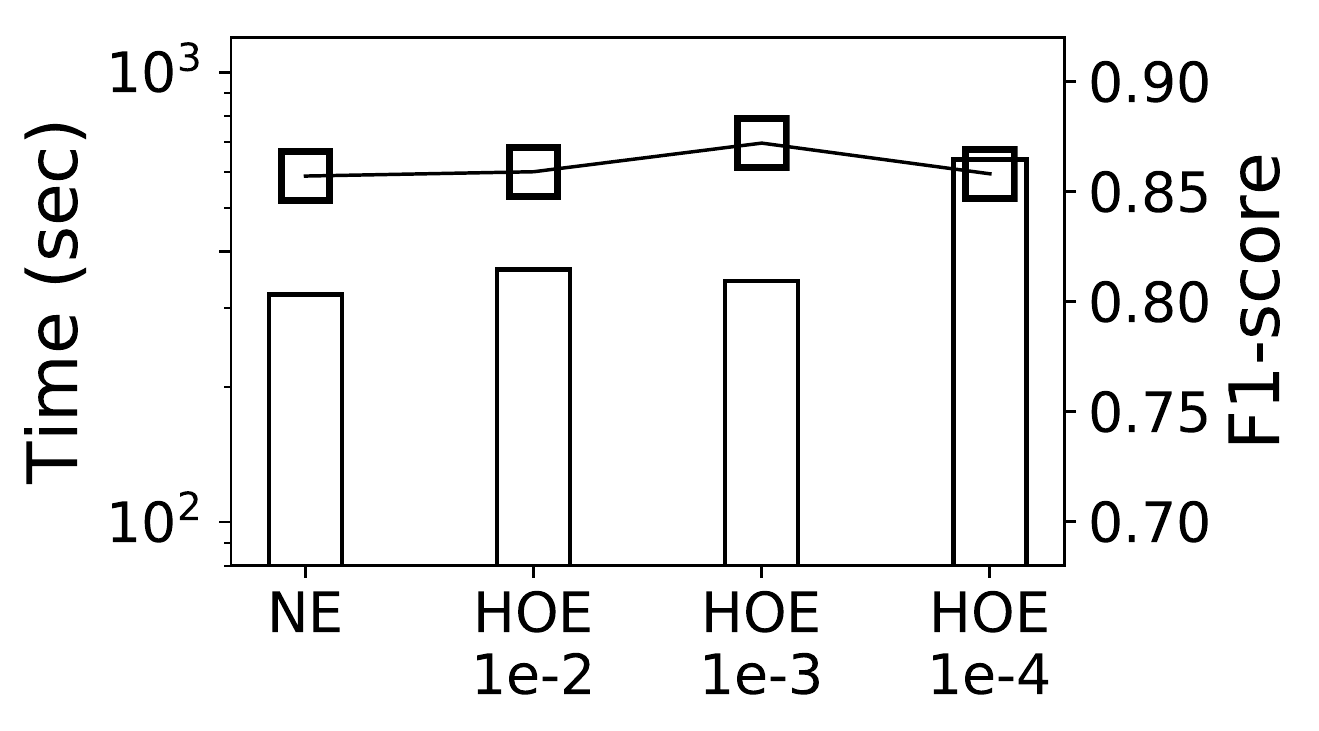}
        \vspace{-5mm}
        \caption{Twitter.}
        \label{fig:twrmax}
    \end{subfigure}
    \hspace{5mm}
    \begin{subfigure}{0.22\textwidth}
        \includegraphics[width=\linewidth+0.2in]{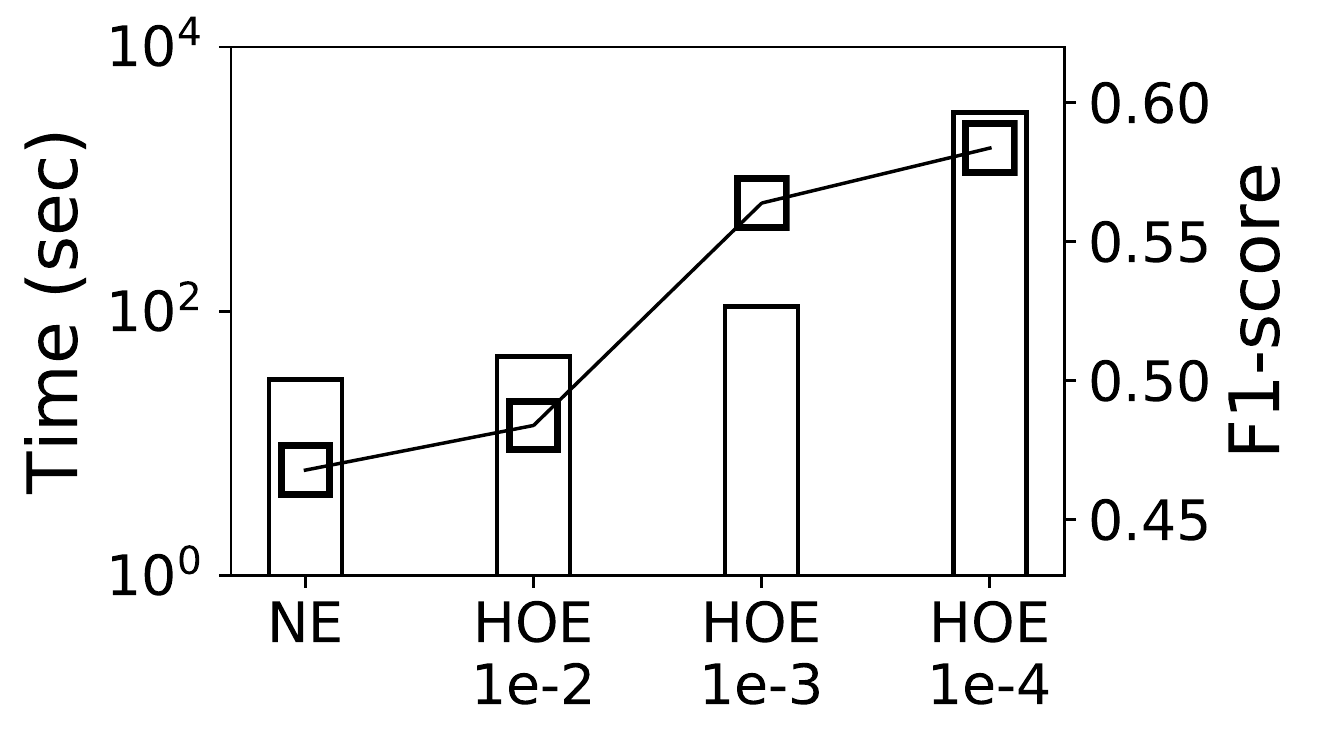}
        \vspace{-5mm}
        \caption{Dblp.}
        \label{fig:dbrmax}
    \end{subfigure}

    \vspace*{-3mm}
    \caption{\label{figure:8}\bf{F1-score and Number of candidate pairs vs. $r'_{max}$.}}
    \vspace{-2mm}
\end{figure}

\vspace{0.1cm}
\noindent
\textbf{4. Varying $r'_{max}$ in Forward-Push.}
We finally compare the performances of NE and HOE with varying $r'_{max}$. 
We record the matching times (represented by bars) and the F1-scores (represented by lines) of different settings.
Firstly, the running time increase as smaller $r'_{max}$ is used. This matches the analysis in Lemma~\ref{lemma:time}.
Secondly, for different datasets,  the optimal $r'_{max}$ varies, e.g., on the Dblp graph, HOE with a small $r'_{max}$ obtains noticeable boost in F1-score (Figure~\ref{figure:8}(b)). However, on Twitter (Figure~\ref{figure:8}(a)), the F1-score of HOE is almost oblivious to the setting of $r'_{max}$.
We observe that the preference of $r'_{max}$ on different datasets is correlated with the average distance of the graph.
In matching graphs with large average distances (e.g., Dblp, Amazon), a small $r'_{max}$ could boost the F1-score by a noticeable margin. Contrarily, when matching two graphs with small average distances (e.g., Twitter, Wiki, AskUbuntu, etc.), HOE with a large $r'_{max}$ or even NE could get high-quality results.

\subsection{Experiment Summary}
(1) HOE and NE have overall higher recalls and precisions than the start-of-the-art algorithms. 
(2) HOE and NE are typically more scalable than previous methods and NE is the most efficient one in terms of time per correct match.
(3) HOE and NE only require very few seeds to achieve peak precisions and recalls; HOE and NE are very robust against wrong seed pairs.
(4) The setting of all the parameters in PPRGM (except for $r'_{max}$) is not affected by the underlying dataset, which greatly eases the tuning of parameters.
(5) HOE with a small $r'_{max}$ is recommended for matching graphs with large average distances, while NE (or HOE with a large $r'_{max}$) is recommended for graphs with small average distances.

\section{Conclusions}

In this paper, we propose a powerful seeded graph matching framework by directly using the high order structural information in the graph.
We propose to quantify the connection between matched vertices and vertices to be matched with personalized PageRank.
A score function is defined to compute the matching score of a pair of vertices based on their PPR values w.r.t. the initial seeds and early matches.
Several optimization strategies are proposed to further boost the performance of our PPRGM framework.
Extensive experiments on large-scale real graphs demonstrate that our new seeded graph matching algorithms outperform start-of-the-art algorithms w.r.t. quality, efficiency, and robustness by noticeable margins.


\balance
\bibliographystyle{abbrv}
\bibliography{bibliography}
\end{document}